\documentclass[a4paper,10pt]{article}
\usepackage{graphicx} 
\usepackage[utf8]{inputenc}
\usepackage[T1]{fontenc}
\usepackage{comment}
\usepackage{float}
\usepackage{natbib}
\usepackage{pdflscape}  
\usepackage{xcolor}
\usepackage{booktabs} 
\usepackage{lipsum} 

\usepackage{a4wide}
\usepackage{indentfirst}
\usepackage{authblk}

\usepackage{amsmath,amssymb}
\usepackage{amsthm}
\usepackage[colorlinks=true, linkcolor=black, citecolor=black, urlcolor=blue]{hyperref}
\usepackage{footmisc} 
\usepackage{subfig}
\usepackage{graphicx}
\usepackage{algorithm2e}
\usepackage{subcaption}
\usepackage{chngcntr}
\counterwithin{table}{section}
\counterwithin{figure}{section}

\usepackage[margin=1.2in]{geometry}
\usepackage{graphicx}
\usepackage{listings}
\usepackage[utf8]{inputenc}
\usepackage{appendix}
\usepackage{shuffle}
\numberwithin{equation}{section}
\newtheorem{teo}{Theorem}[section]
\newtheorem*{teo*}{Theorem}
\newtheorem*{prop*}{Proposition}
\newtheorem*{corol*}{Corollary}
\newtheorem{prop}[teo]{Proposition}

\newtheorem{lema}[teo]{Lemma}
\newtheorem{defi}[teo]{Definition}

\theoremstyle{definition}

\newtheorem*{remark*}{Remark}
\newtheorem{ex}[teo]{Example}

\newtheorem{remark}[teo]{Remark}

\newcommand{\N}{\mathbb{N}}

\newcommand{\R}{\mathbb{R}}

\newcommand{\blfootnote}[2]{%
  \begingroup
  \renewcommand\thefootnote{#1}%
  \addtocounter{footnote}{1}%
  \footnotetext{#2}%
  \addtocounter{footnote}{-1}%
  \endgroup
}

\usepackage[most]{tcolorbox}

\newtcolorbox{methodbox}[1]{
  colback=gray!5, colframe=black, boxrule=0.6pt,
  title={#1}, fonttitle=\bfseries,
  before skip=10pt, after skip=10pt
}

\title{Matrix Approximation of Bachelier Option Prices and Greeks under Stochastic Volatility models}
\author{Elisa Alòs\textsuperscript{*}, Òscar Burés\textsuperscript{†}}
\date{\today}

\begin{document}

\maketitle

\blfootnote{*}{Department of Economics and Business, Universitat Pompeu Fabra and Barcelona School of Economics. Ramón Trias Fargas 25-27, 08005, Barcelona, Spain.}
\blfootnote{†}{Departament de Matemàtica Econòmica, Financera i Actuarial, Universitat de Barcelona. Diagonal 690--696, 08034 Barcelona, Spain.}

\begingroup
\renewcommand{\thefootnote}{}  
\renewcommand{\footnotemargin}{0pt}  
\footnotetext{%
 \noindent\hspace{0pt}Òscar Burés supported by program AGAUR-FI ajuts (2025 FI-1 00580) from the Department of Research and Universities of the Government of Catalonia and the co-funding of the European Social Fund Plus (ESF+).}
\endgroup
\begin{abstract}
In this paper, we present a numerical method for option pricing and the computation of Greeks under stochastic volatility Bachelier-type models, based on elementary linear algebra. The method allows option prices and Greeks to be computed for infinitely many strikes (within a range of convergence) by evaluating only a finite number of expectations, independent of the number of strikes. For the SABR model, we derive an explicit range of convergence. Numerical examples are provided for both the SABR and the rough Bergomi models.
\end{abstract}

\noindent\textbf{Keywords:} Bachelier-type model, SABR, rough volatilities, option pricing, numerical method.

\medskip

\noindent \textbf{MSC Classification:} 91G20, 91G60, 60H30, 60G22
\section{Introduction}

The Black--Scholes model remains the cornerstone of option pricing theory, and the vast majority of models used in practice, such as local volatility, stochastic volatility, and their variants—can be viewed as refinements built on top of it. A defining feature of this framework is that asset prices are restricted to be positive, an assumption that does not always hold: negative prices have recently been recorded in commodity markets, for instance. This has led some markets to adopt the Bachelier model instead (see \cite{ASENS_1900_3_17__21_0,choi2022black}), which dispenses with the positivity constraint and treats asset prices as normally distributed.

A central challenge in option pricing under either the Black--Scholes or the Bachelier framework is finding tractable closed-form approximations for option prices and their implied volatilities. A substantial body of literature addresses this by building expansions around a leading-order term, typically the Black--Scholes or Bachelier price evaluated at a natural volatility proxy such as the spot volatility or the variance swap. Within this literature, two main approaches can be distinguished. The first works directly with the pricing PDE, perturbing around a specific model parameter as in \cite{lewis2000option,hagan2002managing,fouque2000derivatives,fouque2}. The second is probabilistic in nature, expressing option prices in terms of the joint law of the asset price and the variance swap \cite{antonelli2006pricing,fukasawa2011asymptotic,BergomiGuyon2012,alos2012decomposition,alos2020exponentiation}; this approach has the advantage of extending naturally to settings where volatility fails to be Markovian, such as rough volatility models. Within the Bachelier setting specifically, see \cite{baviera2025smile,unkwnown,alos2025short} and the references therein.

Across this literature, a common structure emerges: the expansion's first correction accounts for correlation effects via the leverage swap, the second captures vol-of-vol effects through the quadratic variation of the variance swap, with further refinements appearing at higher orders. While such expansions are accurate near the at-the-money point, they typically lack a closed analytical form \cite{Lewis02092022}, which in turn restricts the range of strikes over which they remain reliable.

The purpose of this paper is to obtain a two-scale analytical expansion for the Bachelier model using Taylor expansions and the Itô formula. The derivation of the expansion for the uncorrelated case has already been carried out in \cite{alos2026analytic}; here we generalize that result to the correlated case. When the expansions in both scales are truncated to a given order, the resulting approximation can be rearranged as a product of vectors and matrices, yielding a linear-algebra-based numerical method for option pricing and for the computation of the Delta and the Gamma of the option. Unlike classical Monte Carlo option pricing, where one expectation must be computed for each strike, our method allows option prices to be computed for as many strikes as desired by pre-computing a finite number of expectations, independent of the number of strikes.

 We analyse the convergence of the numerical method for the SABR model and we conduct numerical experiments under the SABR model and the rough Bergomi model in order to show the validity of our method.

\section{Preliminaries}
We consider the Bachelier model for asset prices under a risk-neutral probability $P$:
\begin{equation}
\label{Bachelier}
dX_t=\sigma _{t}\left( \rho dW_{t}+\sqrt{1-\rho ^{2}}%
B_{t}\right) ,\,t\in \lbrack 0,T] 
\end{equation}
for some $T>0$, where $W$ and $B$ are independent standard Brownian motions, $\rho \in \left[-1,1\right],$ and $\sigma$ is a square integrable
process adapted to the filtration generated by the Brownian motion $%
W$. As in the previous chapters, we denote by $\mathcal{F}^W$ and $\mathcal{F}^B$ the filtrations generated by $W$ and $ B$, respectively, and 
$\mathcal{F}:=\mathcal{F}^W\vee\mathcal{F}^B$. If $\sigma$ is constant and $\rho=0$, the above model is called the {\bf Bachelier model}.

We denote by $Bac(T,x,k,\sigma)$  the classical Bachelier price of a European call with time to maturity \(T\), current stock price \(x\),
strike price \(k\) and volatility \(\sigma\). That is, $$
  Bac(T,x,k,\sigma)=(x-k)N(d_{Bac}(k,\sigma))+N'(d_{Bac}(k,\sigma))\sigma\sqrt{T},$$
with
$$
  d_{Bac}(k,\sigma)=\frac{x-k}{\sigma\sqrt{T}},$$ where \(N\) is the cumulative distribution
function and the probability density  function of the standard normal random variable. 

\vspace{0.5cm}

We denote by $\mathcal{L}_{Bac}\left( \sigma \right) $ denotes the Bachelier 
differential operator with volatility $\sigma :$%
\begin{equation*}
\mathcal{L}_{Bac}\left( \sigma \right) =\frac{\partial }{\partial t}+\frac{1}{%
2}\sigma ^{2}\frac{\partial ^{2}}{\partial x^{2}}
\end{equation*}
It is well known that $\mathcal{L}_{Bac}\left( \sigma \right) Bac\left( \cdot
,\cdot,\cdot ;\sigma \right) =0.$

\vspace{0.5cm}

Finally, we define the Bachelier implied volatility of a traded call option $I^{Bac}(k)$  as the unique volatility parameter one should put in the Bachelier formula to get the market option price $V$. That is, the quantity $I^{Bac}(k)$ such that
$$
V=Bac(T,X_{0},k,I^{Bac}(k)),
$$
where $X_0$ denotes the asset price and $k$ the strike price of the option. Notice that, if $k=X_0$,
\begin{equation}
\label{Ibac}
V=Bac(T,X_{0},X_0,I^{Bac}(X_0))=N'(0)I^{Bac}\sqrt{T}=\frac{1}{\sqrt{2\pi}}I^{Bac}(X_0)\sqrt{T}.
\end{equation}
At the same time, due to the definition of the Black-Scholes implied volatility,
\begin{equation}
\label{Ibs}
V=BS(T,X_0,X_0,I(X_0))=X_0\left(2N\left(  \frac{I(X_0)\sqrt{T}}{2} \right)-1\right).
\end{equation}
Then, (\ref{Ibac}) and (\ref{Ibs}) imply the following conversion formula for ATM implied volatilities:
\begin{equation}
\label{Ibacbsatm}
I^{Bac}(X_0)=\frac{\sqrt{2\pi}}{\sqrt{T}}X_0\left(2N\left(  \frac{I(X_0)\sqrt{T}}{2} \right)-1\right)
\end{equation}
We will also need the following notations. 
\begin{itemize}
\item $v=\sqrt{\frac{1}{T}E\int_0^T\sigma_s^2 ds} $ is the square root of the variance swap.
\item $\hat{v}=E\sqrt{\frac{1}{T}\int_0^T\sigma_s^2 ds} $ is the volatility swap.
\item For all $s\in [0,T]$, we define $M_s=\frac{1}{T}E_s\int_0^T\sigma_u^2ds$.
\item For all $s\in [0,T]$, we denote $v_{s}=\sqrt{\frac{1}{T}E_s\int_0^T\sigma_u^2 du}$. In particular, $v_{0}=v.$
\item For all $s \in [0,T]$, we denote $\xi_s = \int_0^s \sigma_u dW_u$.
\end{itemize}
Notice that $v=\sqrt{M_0}$, $v_s=\sqrt{M_s}$, and $\hat{v}=E\sqrt{M_T}$. Then, a direct application of It\^o's formula to the process $M$ and the function $f(x)=\sqrt{x}$ leads to the following relationship between the variance and the volatility swap
\begin{equation}\label{convexity}
\hat{v}=v-\frac{1}{8}E\int_0^T\frac{1}{v_{s}^3}d\langle M,M\rangle_s.
\end{equation}
We also recall the Hull-White formula, that allows us to write the Bachelier option price $V$ as
\[
V = E\left[ Bac(T,X_0 + \rho \xi_T, k, \sqrt{1-\rho^2}v_T)\right].
\]
We also define the Delta ($\Delta$) and the Gamma ($\Gamma$) of the option $V$ as
\[
\Delta = \partial_{X_0}V, \quad \Gamma = \partial^2_{X_0} V.
\]
In this paper we will show a numerical method that allows us to compute $V$,$\Delta$ and $\Gamma$ using basic linear algebra. To do so, we will first recall some linear algebra definitions in order to state the notation we will use from now on.
\begin{defi}
    We will denote by $I_n \in \R^{n \times n}$ the identity matrix of order $n$.
\end{defi}
\begin{defi}
    Let $A = (a_{i,j}) \in \R^{m,n}$, $B = (b_{i,j}) \in \R^{p,q}$. We denote the Kronecker product of $A$ times $B$ as the block matrix defined by
    \[
    A \otimes B = \begin{pmatrix}
        a_{11}B & a_{12}B & \dots & a_{1n}B \\
        a_{21}B & a_{22} B & \dots & a_{2n}B \\
        \vdots & \vdots & \dots & \vdots \\
        a_{m1}B & a_{m2}B & \dots & a_{mn}B
    \end{pmatrix}.
    \]
\end{defi}
\section{An algebraic option pricing method}
In this section we present a numerical method for computing option prices under the Bachelier model, as well as the $\Delta$ and the $\Gamma$ of the option. 

\begin{methodbox}{Method for computing $V$, $\Delta$ and $\Gamma$ separately}
    Choose $M_{max}, N_{max} \in \N$. We can compute $V$ as
    \[
V = a_0 + \Vec{\rho}_e \mathcal{M}_e \mathbf{x}_e + \Vec{\rho}_o \mathcal{M}_o \mathbf{x}_o,
\]
where 
\[
a_0 = Bac\left(X_0,k,\sqrt{1-\rho^2}\,v\right), 
\]
 \[
    \Vec{\rho}_e^T = \begin{pmatrix}
        1 \\
        \frac{\rho^2}{2} \\
        \frac{\rho^4}{4!} \\
        \vdots
    \end{pmatrix}, \quad \mathcal{M}_e = \begin{pmatrix}
        b_0 & b_1 & \dots \\
        b(0,1) & b(0,2) & \dots \\
        \vdots & \vdots & \vdots & \\
        b(N_{max}, 1) & b(N_{max}, 2) & \dots
    \end{pmatrix}, \quad  \mathbf{x}_e = \begin{pmatrix}
        1  \\
        (X_0 -k)^2 \\
        (X_0 - k)^4 \\
        \vdots \\
    \end{pmatrix},
    \]
     \[
    \Vec{\rho}_o^T = \begin{pmatrix}
        \rho \\
        \frac{\rho^3}{3!} \\
        \frac{\rho^5}{5!} \\
        \vdots
    \end{pmatrix}, \quad \mathcal{M}_o = \begin{pmatrix}
        c_1 & c_2 & \dots \\
        c(0,1) & c(0,2) &\dots \\
        \vdots & \vdots& \vdots \\
        c(N_{max}, 1) & c(N_{max}, 2) & \dots
    \end{pmatrix}, \quad \mathbf{x}_o = \begin{pmatrix}
        (X_0-k)  \\
        (X_0 -k)^3 \\
        (X_0 - k)^5 \\
        \vdots \\
    \end{pmatrix},
    \]
and
\[
b_n = \frac{-\sqrt{T}}{\sqrt{2\pi}}\frac{(-1)^n\left(E(v_T^{1-2n}) - v^{1-2n} \right)}{n!(2n-1) 2^n T^n(1-\rho^2)^{n-\frac{1}{2}}},
\]
\[
 c_n =\frac{1}{2\sqrt{2\pi}}\frac{(-1)^nE\left[\xi_T v_T^{-1-2n} \right]}{ n! 2^n T^{n+1/2}(1-\rho^2)^{n+1/2}} \frac{4}{4n+2},
\]
\[
b(p,n) = \frac{1}{\sqrt{2\pi}}\frac{(-1)^{n+p}(2n+2p)!}{(2n)! (n+p)!2^{n+p}(1-\rho^2)^{n+p+1/2}T^{n+p+1/2}}E\left[ \xi_T^{2p+2} v_T^{-2n-2p-1}\right],
\]
and
\[
c(p,n) = \frac{-1}{\sqrt{2\pi}} \frac{(-1)^{n+p} (2n+2p+2)!}{(2n+1)! (n+p+1)! 2^{n+p+1} (1-\rho^2)^{n+p+3/2}T^{n+p+3/2}} E[\xi_T^{2p+3}v_T^{-2n-2p-3}].
\]
The $\Delta$ and the $\Gamma$ are computed respectively as
  \[
    \Delta = \partial_{X_0}a_0 + \Vec{\rho}_e \mathcal{M}_e \partial_{X_0}\mathbf{x}_e + \Vec{\rho}_o \mathcal{M}_o \partial_{X_0}\mathbf{x}_o,
    \]
    \[
    \Gamma = \partial^2_{X_0}a_0 + \Vec{\rho}_e  \mathcal{M}_e \partial^2_{X_0}\mathbf{x}_e + \Vec{\rho}_o \mathcal{M}_o \partial^2_{X_0}\mathbf{x}_o.
    \]
\end{methodbox}
\begin{remark}
    The method presented in this paper can be slightly modified in order to compute $V$, $\Delta$ and $\Gamma$ simmultaniusly. Indeed, under the same definitions, 
\[
\begin{pmatrix}
    V \\
    \Delta \\
    \Gamma
\end{pmatrix} = \begin{pmatrix}
    Bac(T, X_0, k, \sqrt{1-\rho^2}v) \\
    \Phi\left(d(k, \sqrt{1-\rho^2}v) \right) \\
    \frac{\phi(d(k, \sqrt{1-\rho^2 v)}}{\sqrt{1-\rho^2}v \sqrt{T}}
\end{pmatrix} + \mathbf{\rho}^T \begin{pmatrix}
    \mathcal{M}\textbf{x} \\
    \mathcal{M}D \textbf{x} \\
    \mathcal{M}D^2 \textbf{x}
\end{pmatrix}.
\]
Here,
\[
\mathbf{\rho} = (\rho^m)_{m=0}^{M_{max}},
\]
\[
\mathcal{M}_{i,j} = \begin{cases}
    b_{\frac{j-1}{2}} & i = 0, j \text{ odd}, \\
    \frac{1}{(i-1)!}b\left(\frac{i-3}{2}, \frac{j+1}{2} \right) & i \geq 3 \text{ odd}, j \text{ odd}, \\
    c_{\frac{j}{2}} & i = 2, j \text{ even} \\
    \frac{1}{(i-1)!}c\left( \frac{i-4}{2}, \frac{j}{2}\right) & i \geq 4 \text{ even}, j \text{ even}, \\
    0 & i+j  \text{ even}
\end{cases}
\]
and the matrix $D$ is defined as $D_{j,j-1} = j-1$ and $D_{i,j} = 0$ otherwise. For this representation, the vector $\mathbf{x}$ contains all powers of $X_0-k$, that is,
\[
\mathbf{x} = (1,X_0-k, (X_0-k)^2, \dots)^T.
\]
\end{remark}
\begin{remark}
    The method is stated for a single strike. In order to compute option prices and Greeks for strikes $k_1, \dots, k_r$ simply replace the vector $\mathbf{x}$ by the matrix
    \[
    \mathbf{x} = \begin{pmatrix}
        1 & 1 & \dots & 1 \\
    X_0-k_1 & X_0-k_2 & \dots& X_0 -k_r \\
    (X_0-k_1)^2 & (X_0 - k_2)^2& \dots & (X_0- k_r)^2 \\
    \vdots & \vdots & & \vdots
    \end{pmatrix}.
    \]
\end{remark}
\begin{remark}
    Notice that the number of expectations one needs to compute in order to run the method depends only on the hyper-parameters $M_{max}$ and $N_{max}$, but not on the number of strikes $r$. This means that, by computing only a finite number of expectations (depending on $M_{max}$ and $N_{max}$), we can obtain option prices and Greeks for an arbitrary number of strikes within a convergence domain that depends on the specific stochastic volatility model under consideration. In Section \ref{s: convergence SABR}, we derive such a range of convergence for the SABR model.
\end{remark}
The rest of the paper is devoted to show the derivation of the numerical method, as well as a convergence analysis for the SABR model. Finally we will show the accuracy of the method using numerical simulation.
\subsection{Derivation of the method}
The key for obtaining the method is expanding $V$ using Taylor expansions and the Itô formula in a specific manner. Firstly, a classical Taylor expansion around the spot price argument allows us to write the following representation for $V$.
\begin{prop} \label{prop expansion rho}
    The option price $V$ can be expressed as
    \begin{equation} \label{Expansion rho}
    V = \sum_{m=0}^{\infty} a_m(\rho)\rho^m,
    \end{equation}
    where
    \[
    a_m(\rho) = \frac{1}{m!}E\left[ \partial_x^mBac(T,X_0, k, \sqrt{1-\rho^2}v_T)\xi_T^m\right]
    \]
    provided the right hand side of \eqref{Expansion rho} is convergent.
\end{prop}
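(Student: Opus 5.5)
The plan is to start from the Hull--White representation recalled in the preliminaries,
\[
V = E\left[ Bac(T,X_0 + \rho \xi_T, k, \sqrt{1-\rho^2}\,v_T)\right],
\]
and Taylor expand the integrand in its spot argument around $X_0$, keeping the volatility argument $\sqrt{1-\rho^2}\,v_T$ frozen. For $\rho^2<1$ and $v_T>0$ almost surely, the map $x\mapsto Bac(T,x,k,\sqrt{1-\rho^2}v_T)$ is real analytic in $x$ on all of $\R$. Indeed, $Bac(T,x,k,s)=s\sqrt{T}\,\psi\big((x-k)/(s\sqrt{T})\big)$ with $\psi(d)=dN(d)+N'(d)$, and $\psi$ is entire because $N$ and $N'$ are entire. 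Hence, pathwise,
\[
Bac(T,X_0+\rho\xi_T,k,\sqrt{1-\rho^2}v_T)=\sum_{m=0}^\infty \frac{1}{m!}\,\partial_x^m Bac(T,X_0,k,\sqrt{1-\rho^2}v_T)\,(\rho\xi_T)^m ,
\]
with the series converging for every $\omega$.

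The next step is to take expectations and interchange $E$ with the infinite sum. Factoring $\rho^m$ out of each term then gives exactly $a_m(\rho)\rho^m$ with $a_m(\rho)=\frac{1}{m!}E[\partial_x^m Bac(T,X_0,k,\sqrt{1-\rho^2}v_T)\xi_T^m]$. The statement itself assumes convergence of the right-hand side of \eqref{Expansion rho}, so the proof reduces to justifying the interchange under this hypothesis.

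This interchange is the only delicate step. My first attempt is dominated convergence on partial sums, using Taylor's theorem with integral remainder. Write the $M$-th partial sum plus a remainder
\[
R_M=\frac{(\rho\xi_T)^{M+1}}{M!}\int_0^1(1-\theta)^M\,\partial_x^{M+1}Bac(T,X_0+\theta\rho\xi_T,k,\sqrt{1-\rho^2}v_T)\,d\theta .
\]
Then $V-\sum_{m\le M}a_m(\rho)\rho^m=E[R_M]$. Since $R_M\to0$ pathwise by analyticity, it suffices to show $E[R_M]\to0$. The derivatives $\partial_x^{m}Bac$ for $m\ge2$ equal $(s\sqrt T)^{1-m}$ times Hermite-type polynomials times the Gaussian density. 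I would bound them by $C_m (s\sqrt T)^{1-m}$, which gives integrability whenever $E[|\xi_T|^m v_T^{1-m}]<\infty$; this is the same moment condition implicit in the finiteness of each $a_m$.

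If a uniform dominating bound is hard to obtain in general, I would fall back on the weaker reading of the statement. In that reading, "provided the right-hand side is convergent" means the expectations $a_m(\rho)$ exist and the series $\sum_m E|\cdots|$ converges. Fubini--Tonelli then directly justifies swapping sum and expectation, and the identity holds whenever that absolute series converges. This is consistent with how the paper later treats convergence model by model, for SABR in Section~\ref{s: convergence SABR}.
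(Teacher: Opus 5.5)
Your argument is the paper's: Hull--White representation, pathwise Taylor expansion in the spot argument with $\sqrt{1-\rho^2}\,v_T$ held fixed, then exchanging $E$ and the sum, which the paper simply assumes is permitted. Your Tonelli/dominated-convergence fallback under $\sum_m E\bigl|\frac{1}{m!}\partial_x^m Bac(T,X_0,k,\sqrt{1-\rho^2}v_T)(\rho\xi_T)^m\bigr|<\infty$ is the step that actually closes the argument; the remainder route alone would need $E|R_M|\to 0$, not just integrability of each term. Note that this hypothesis, like the paper's own assumption, is stronger than mere convergence of $\sum_m a_m(\rho)\rho^m$.
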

\begin{proof}
    The Hull-White formula states that
    \[
    V =  E\left[ Bac(T,X_0 + \rho \xi_T, k, \sqrt{1-\rho^2}v_T)\right].
    \]
    A Taylor expansion around the spot argument under the assumption that the sum and the expectation can be exchanged leads to
    \[
    V = \sum_{m=0}^{\infty} \frac{1}{m!}E \left[  \partial_x^mBac(T,X_0, k, \sqrt{1-\rho^2}v_T)(\rho \xi_T)^m\right].
    \]
    Rearranging the constants we obtain the desired representation.
\end{proof}
To obtain a desirable representation for each coefficient $a_m(\rho)$, we will apply the Itô formula to the process $\partial_x^mBac(T,X_0, k, \sqrt{1-\rho^2}v_T)\xi_T^m$ in order to obtain a similar representation as in \cite{alos2026analytic}. Before obtaining the representation, we need some auxiliary results regarding the derivatives of the Bachelier function.
\begin{lema}
    Let $p \in \N \cup \{0\}$. Then, for every $\sigma > 0$ we have
    \[
    \partial_x^{2p+2}Bac(T,X_0,K,\sigma)=\frac{1}{\sqrt{2\pi}}\sum_{n=0}^{\infty}\frac{(-1)^{n+p}(2n+2p)!}{(2n)!(n+p)!2^{\,n+p}}\,\frac{(X_0-K)^{2n}}{\sigma^{2n+2p+1}T^{n+p+1/2}},
    \]
    and
    \[
    \partial_x^{2p+3}Bac(T,X_0, k, \sigma) = \frac{1}{\sqrt{2\pi}}\sum_{n=0}^{\infty}\frac{(-1)^{n+p+1}(2n+2p+2)!}{(2n+1)!(n+p+1)!2^{n+p+1}}\frac{(X_0-K)^{2n+1}}{\sigma^{2n+2p+3}T^{n+p+3/2}}.
    \]
\end{lema}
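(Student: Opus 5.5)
Write $y=X_0-K$ and $s=\sigma\sqrt{T}$. The plan is to compute the second derivative of the Bachelier price in closed form, expand the resulting Gaussian density as a power series in $y$, and then differentiate that series term by term.

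\textbf{Step 1: the second derivative.} From the formula for $Bac$, differentiating in $x$ gives $\partial_x Bac = N(y/s)$, because the two terms involving $N'$ cancel (use $N''(d)=-dN'(d)$). Hence
\[
\partial_x^2 Bac(T,X_0,K,\sigma)=\frac{1}{s}N'(y/s)=\frac{1}{\sqrt{2\pi}\,\sigma\sqrt{T}}\,e^{-y^2/(2\sigma^2T)}.
\]

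\textbf{Step 2: power series of the Gaussian.} Expanding the exponential gives
\[
\partial_x^2 Bac=\frac{1}{\sqrt{2\pi}}\sum_{m=0}^{\infty}\frac{(-1)^m}{m!\,2^m}\,\frac{y^{2m}}{\sigma^{2m+1}T^{m+1/2}}.
\]
This series has infinite radius of convergence. It is therefore an entire function of $y$, and we may differentiate it term by term as many times as we like.

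\textbf{Step 3: the even derivatives $\partial_x^{2p+2}$.} Apply $\partial_y^{2p}$. The terms with $m<p$ vanish, and for $m\ge p$ we use $\partial_y^{2p}y^{2m}=\frac{(2m)!}{(2m-2p)!}y^{2m-2p}$. Reindexing with $m=n+p$ gives
\[
\frac{1}{\sqrt{2\pi}}\sum_{n\ge0}\frac{(-1)^{n+p}(2n+2p)!}{(2n)!\,(n+p)!\,2^{n+p}}\,\frac{y^{2n}}{\sigma^{2n+2p+1}T^{n+p+1/2}},
\]
which is exactly the first formula.

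\textbf{Step 4: the odd derivatives $\partial_x^{2p+3}$.} Apply one more derivative, $\partial_y^{2p+1}$, to the same series. Now the terms with $m\le p$ vanish, because $y^{2m}$ with $m\le p$ has degree at most $2p$. For $m\ge p+1$ we set $m=n+p+1$ and use $\partial_y^{2p+1}y^{2m}=\frac{(2m)!}{(2n+1)!}y^{2n+1}$. This gives the coefficient $(-1)^{n+p+1}(2n+2p+2)!/((2n+1)!\,(n+p+1)!\,2^{n+p+1})$ and the power $\sigma^{-(2n+2p+3)}T^{-(n+p+3/2)}$, which matches the second formula.

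\textbf{Where care is needed.} The only real points to check are the justification for term-by-term differentiation, which Step 2 supplies, and the index bookkeeping: which low-order terms are killed by the differentiation, and how the factorials shift under reindexing. I also note that the statement uses $k$ and $K$ interchangeably for the strike; I will treat them as the same quantity.
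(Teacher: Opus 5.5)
Your proof is correct and follows essentially the paper's route: write $\partial_x^2 Bac$ as the Gaussian density $N'(d)/(\sigma\sqrt{T})$, expand it as a power series, and differentiate term by term, with the same reindexing $m=n+p$ and $m=n+p+1$. The only cosmetic difference is that the paper first writes $\partial_x^m Bac=(\sigma\sqrt{T})^{1-m}N^{(m-1)}(d)$, differentiates the series for $N'$ in $d$, and substitutes $d=(X_0-k)/(\sigma\sqrt{T})$ at the end, whereas you substitute first and differentiate in $y$; your justification of term-by-term differentiation via the infinite radius of convergence is a step the paper leaves implicit.
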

\begin{proof}
    The Bachelier function satisfies
    \[
    \partial^2_x Bac(T,X_0, k, \sigma) = \frac{1}{\sigma\sqrt{T}}N'(d)
    \]
    and, since $\partial_x d = \frac{1}{\sigma\sqrt{T}}$ all the derivatives of order $m \geq 2$ of the Bachelier function can be written as
    \[
    \partial^m_x Bac(T,X_0, k, \sigma) = \frac{1}{(\sigma \sqrt{T})^{m-1}} N^{(m-1)}(d).
    \]
    Now, the function $N'$ is even in $d$, so its power series representation is written as
    \[
    N'(d) = \frac{1}{\sqrt{2\pi}}\sum_{l = 0}^{\infty} \frac{(-1)^l}{2^l l!} d^{2l}.
    \]
    Computing the derivatives of $N'$ via its power series we find that
    \[
    (N')^{(2p)}(d) = \frac{1}{\sqrt{2\pi}}\sum_{n =0}^{\infty} \frac{(-1)^{n+p}(2n+2p)!}{2^{n+p}(n+p)!(2n)!}d^{2n},
    \]
    and
    \[
    (N')^{(2p+1)} = \frac{1}{\sqrt{2\pi}}\sum_{n=0}^{\infty} \frac{(-1)^{n+p+1}(2n+2p+2)!}{2^{n+p+1}(n+p+1)!(2n+1)!}d^{2n+1}.
    \]
    Substituting
    \[
    d = \frac{X_0 - k}{\sigma \sqrt{T}}
    \]
    we obtain the desired result.
\end{proof}
The application of the Itô formula to the process $\partial_x^mBac(T,X_0, k, \sqrt{1-\rho^2}v_T)\xi_T^m$ allows us to express $a_m(\rho)$ as power series, provided those power series are convergent. 
\begin{teo} \label{coefs en rho}
    Let $a_m(\rho)$ be defined as in Proposition \ref{prop expansion rho}. Then, the following representation for $a_m(\rho)$ holds:
    \begin{itemize}
        \item If $m = 0$,
         \begin{align*}
        a_0(\rho) =&Bac\left(X_0,k,\sqrt{1-\rho^2}\,v\right) + \frac{\sqrt{1-\rho^2}\sqrt{T}}{\sqrt{2\pi}}\big(\hat v - v\big)\\
        - &\frac{\sqrt{T}}{\sqrt{2\pi}}\sum_{n=1}^{\infty} \frac{(-1)^n(X_0-k)^{2n}}{n!\,(2n-1)}\left(\frac{1}{2T}\right)^n (1-\rho^2)^{\frac{1}{2}-n}\Big(E[v_T^{1-2n}]-v^{1-2n}\Big).
    \end{align*}
    \item If $m = 1$,
    \[
    a_1(\rho) = \frac{1}{2\sqrt{2\pi}}\sum_{n=0}^{\infty}\frac{(-1)^n(X_0- k)^{2n+1}}{ n! 2^n T^{n+1/2}(1-\rho^2)^{n+1/2}} \frac{4}{4n+2}E\left[\xi_T v_T^{-1-2n} \right].
    \]
    \item If $m = 2p+2$ with $p \in \N \cup \{0\}$, then
    \begin{align*}
   & a_{2p+2}(\rho) \\
   = &\frac{1}{(2p+2)!\sqrt{2\pi}}\sum_{n=0}^{\infty} \frac{(-1)^{n+k}(2n+2p)!(X_0 -k)^{2n}}{(2n)! (n+p)!2^{n+p}(1-\rho^2)^{n+p+1/2}T^{n+p+1/2}}E\left[ \xi_T^{2p+2} v_T^{-2n-2p-1}\right].
    \end{align*}
    \item If $m = 2p+3$ with $p \in \N \cup \{0\}$, then
    \begin{align*}
    &a_{2p+3}(\rho) \\
    = &\frac{-1}{(2+3)!\sqrt{2\pi}}\sum_{n=0}^{\infty}\frac{(-1)^{n+p} (2n+2p+2)!(X_0-k)^{2n+1}}{(2n+1)! (n+p+1)! 2^{n+p+1} (1-\rho^2)^{n+p+3/2}T^{n+p+3/2}} E[\xi_T^{2p+3}v_T^{-2n-2p-3}],
    \end{align*}
    \end{itemize}
    provided all series are convergent.
\end{teo}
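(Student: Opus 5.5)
The plan is to treat the four cases separately, and to split them by the order $m$ of the spatial derivative. For $m\ge 2$ the cases are purely algebraic: they follow from the preceding Lemma. For $m=0$ and $m=1$ the derivative $\partial_x^m Bac$ is not a pure power series in $X_0-k$ with a $\sigma$-dependent coefficient, so a Taylor/Itô argument in the volatility variable is needed.

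\textbf{Cases $m=2p+2$ and $m=2p+3$.} I would insert $\sigma=\sqrt{1-\rho^2}\,v_T$ into the series of the Lemma, which holds pathwise for every $\sigma>0$. This gives $\partial_x^{m}Bac(T,X_0,k,\sqrt{1-\rho^2}v_T)$ as a power series in $(X_0-k)$. Its coefficients are constants times $(1-\rho^2)^{-n-p-1/2}v_T^{-2n-2p-1}$ (respectively with exponents shifted by one for the odd case). Multiplying by $\xi_T^m/m!$ and exchanging expectation and sum, which is legitimate under the standing proviso that the series converge, yields $a_m(\rho)$ with the expectations $E[\xi_T^{2p+2}v_T^{-2n-2p-1}]$ and $E[\xi_T^{2p+3}v_T^{-2n-2p-3}]$. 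In the odd case the sign $(-1)^{n+p+1}$ is rewritten as $-(-1)^{n+p}$. The prefactors in the statement contain typographical slips ($(-1)^{n+k}$ should be $(-1)^{n+p}$, and $(2+3)!$ should be $(2p+3)!$), which I would correct.

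\textbf{Case $m=0$.} Here I would expand $\sigma\mapsto Bac(T,X_0,k,\sigma)$ itself as a series in $(X_0-k)$. Integrating the series for $\partial_x^2 Bac$ (the $p=0$ case of the Lemma) twice in $x$, with the constants fixed by $Bac(T,k,k,\sigma)=\sigma\sqrt T/\sqrt{2\pi}$ and $\partial_x Bac(T,k,k,\sigma)=1/2$, gives
\[
Bac(T,X_0,k,\sigma)=\frac{X_0-k}{2}+\frac{\sigma\sqrt T}{\sqrt{2\pi}}-\frac{\sqrt T}{\sqrt{2\pi}}\sum_{n\ge1}\frac{(-1)^n (X_0-k)^{2n}}{n!(2n-1)(2T)^n}\sigma^{1-2n}.
\]
I then apply this at $\sigma=\sqrt{1-\rho^2}v_T$ and at $\sigma=\sqrt{1-\rho^2}v$, take expectations, and subtract. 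The linear term cancels. The $\sigma$ term gives $\frac{\sqrt{1-\rho^2}\sqrt T}{\sqrt{2\pi}}(\hat v-v)$, because $E v_T=\hat v$. The remaining terms give the stated series in $E[v_T^{1-2n}]-v^{1-2n}$.

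The Itô route mentioned before the theorem (applying Itô to $Bac(T,X_0,k,\sqrt{1-\rho^2}v_s)$ using $\mathcal L_{Bac}$) gives the same result. It reduces to $E\int\partial_\sigma\cdots dM$ plus second-order terms, and collapses to the same difference. The direct series argument is cleaner, so I would use it.

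\textbf{Case $m=1$.} Integrating the $p=0$ series for $\partial_x^2Bac$ once, $\partial_x Bac=\tfrac12+\frac{1}{\sqrt{2\pi}}\sum_n\frac{(-1)^n(X_0-k)^{2n+1}}{n!2^n(2n+1)\sigma^{2n+1}T^{n+1/2}}$. I multiply by $\xi_T$, set $\sigma=\sqrt{1-\rho^2}v_T$, and take expectations. The constant $\tfrac12$ drops because $E\xi_T=0$, since $\xi$ is a martingale given square integrability of $\sigma$. The identity $\frac{1}{2}\cdot\frac{4}{4n+2}=\frac{1}{2n+1}$ then matches the stated coefficient.

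The main obstacle is justifying the exchanges of expectation with the infinite sums, and the vanishing of $E\xi_T$. Everything else is bookkeeping of signs and factorials. The statement sidesteps the exchanges with "provided all series are convergent", and I would make that interpretation explicit: absolute convergence of the series of expectations, which justifies Fubini.
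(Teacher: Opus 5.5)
Your proof is correct, and it takes a more direct route than the paper.

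The paper applies It\^o's formula to $s\mapsto \partial_x^m Bac(T,X_0,k,\sqrt{1-\rho^2}\,v_s)\,\xi_s^m$, using the heat equation to trade $\sigma$-derivatives for $\partial_x^{m+2}$ and $\partial_x^{m+4}$. This produces integrals against $\sigma_s^2\,ds$, $d\langle \xi,M\rangle_s$ and $d\langle M,M\rangle_s$. The paper then Taylor-expands each integrand in $X_0-k$ and applies It\^o's formula a second time, to $\xi_s^{\beta}v_s^{-2\alpha}$, to recombine the three integrals into $E[\xi_T^{\beta}v_T^{-2\alpha}]$. For $m=0$ the recombination goes through $E\int_0^T v_s^{-2n-3}\,d\langle M,M\rangle_s=\frac{8}{4n^2-1}(E[v_T^{1-2n}]-v^{1-2n})$, whose $n=0$ term produces $\hat v-v$.

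The second It\^o step just undoes the first, term by term. You bypass both by expanding $\partial_x^m Bac(T,X_0,k,\sigma)$ pathwise in $X_0-k$ at $\sigma=\sqrt{1-\rho^2}\,v_T$. For $m=0,1$ you integrate the $p=0$ series using the at-the-money values $\sigma\sqrt T/\sqrt{2\pi}$ and $1/2$.

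Your route is shorter and needs less. It uses only the interchange of $E$ with $\sum_n$ and $E\xi_T=0$. The paper's route also tacitly requires the It\^o stochastic integrals to be true martingales, and sums to be exchanged with the $d\langle M,M\rangle_s$ and $d\langle \xi,M\rangle_s$ integrals. What the paper's detour offers is the intermediate leverage and vol-of-vol form of each coefficient, which links the expansion to the decomposition literature. That form is not needed for the statement as written.

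Your typo corrections to $(-1)^{n+p}$ and $(2p+3)!$ are right. One precision: Fubini needs $\sum_n E|Y_n|<\infty$ for the individual terms $Y_n$ of the pathwise series. This is stronger than absolute convergence of the series of expectations, $\sum_n |E Y_n|<\infty$, and it is the reading of the proviso you should adopt.
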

\begin{proof}
    Recall that 
    \[
    m!a_m(\rho) = E\left[ \partial_x^m Bac(T,X_0, k, \sqrt{1-\rho^2}v_T) \xi_T^m\right]. 
    \]
    Using the fact that both processes $M_s = v_s^2$ and $\xi_s$ are martingales, we can apply the Itô formula to the process $\partial_x^m Bac(T,X_0, k, \sqrt{1-\rho^2}v_T) \xi_T^m$ in order to obtain
    \begin{align*}
           E\left[\partial_x^m Bac(T,X_0, k, \sqrt{1-\rho^2}v_T) \xi_T^m\right] 
       = &E\left[ \partial_x^m Bac(T,X_0, k, \sqrt{1-\rho^2}v_0) \xi_0^m\right] \\
        + &\frac{m(m-1)}{2}E\left[ \int_0^T \xi_s^{m-2} \sigma_s^2 \partial_x^m Bac(T,X_0, k, \sqrt{1-\rho^2} v_s) ds\right] \\
    + &\frac{T^2(1-\rho^2)^2}{8}E\left[\int_0^T \xi_s^m \partial_x^{m+4}Bac(T,X_0, k, \sqrt{1-\rho^2}v_s) d\langle M, M \rangle_s \right] \\
    + &\frac{mT(1-\rho^2)}{2}E \left[\int_0^T \partial_x^{m+2}Bac(T,X_0, k, \sqrt{1-\rho^2}v_s) \xi_s^{m-1} \sigma_s d\langle W, M \rangle_s \right].
    \end{align*}
    Notice that not all terms contribute for all $m \geq 0$. Indeed, the terms $m = 0$, $m = 1$ and $m \geq 2$ have to be treated separately.
    \begin{itemize}
        \item For $m = 0$ the result of the Itô formula yields
        \begin{align*}
        E[Bac(T,X_0,k, \sqrt{1-\rho^2}v_T)] = 
        &E[Bac(T,X_0,k, \sqrt{1-\rho^2}v_0)] \\ 
        + &\frac{T^2(1-\rho ^2)^2}{8} E\left[ \int_0^T \partial_x^4 Bac(T,X_0, k, \sqrt{1-\rho^2}v_s) d\langle M, M\rangle_s\right].
        \end{align*}
       A Taylor expansion around $d = \frac{X_0 - k}{\sqrt{T}\sqrt{1-\rho^2}v_s}$ shows that $a_0(\rho)$ can be written as
       \begin{align*}
      &Bac(T,X_0,k, \sqrt{1-\rho^2}v_0)  \\
        - &\frac{T^2(1-\rho ^2)^2}{8} E\left[ \int_0^T \frac{1}{\sqrt{2\pi}}\sum_{n=0}^{\infty} \frac{(-1)^n (X_0-k)^{2n}}{(2n)! (n+1)! 2^{n+1}T^{n+\frac{3}{2}} (1-\rho^2)^{n+\frac{3}{2}}v_s^{2n+3}} d\langle M, M\rangle_s\right] \\
        =& Bac(T,X_0,k, \sqrt{1-\rho^2}v_0)  \\
        - &\frac{T^2(1-\rho^2)^2}{8\sqrt{2\pi}}\sum_{n=0}^{\infty} \frac{(-1)^n (X_0-k)^{2n}}{(2n)! (n+1)! 2^{n+1}T^{n+\frac{3}{2}}(1-\rho^2)^{n+\frac{3}{2}}}E\left[\int_0^T \frac{1}{v_s^{2n+3}}d\langle M, M \rangle_s \right].
        \end{align*}
        Using that
        \[
        E\left[\int_0^T \frac{1}{v_s^{2n+3}} d\langle M, M \rangle_s\right] = \frac{8}{4n^2 - 1}\left( E[v_T^{1-2n}] - v^{1-2n}\right).
        \]
        Rearranging the terms of the expansion we are lead to
        \begin{align*}
            a_0(\rho) = &Bac\left(T,X_0, k , \sqrt{1-\rho^2}v\right) + \frac{\sqrt{1-\rho^2}\sqrt{T}}{\sqrt{2\pi}}(\hat{v} - v) \\ 
           -&  \frac{\sqrt{T}}{\sqrt{2\pi}}\sum_{n=1}^{\infty} \frac{(-1)^n(X_0-k)^{2n}}{n!(2n-1)2^n T^n(1-\rho^2)^{n-\frac{1}{2}}}\left( E[v_T^{1-2n}] - v^{1-2n}\right).
        \end{align*}
        \item For $m = 1$ the Itô formula yields
        \begin{align*}
       E\left[ Bac(T,X_0, k, \sqrt{1-\rho^2}v_T) \xi_T\right]  = &\frac{T^2(1-\rho^2)^2}{8}E\left[\int_0^T \xi_s\partial_x^{5}Bac(X_0, k, \sqrt{1-\rho^2}v_s) d\langle M, M \rangle_s \right]\\
        + &\frac{T(1-\rho)}{2}E\left[\int_0^T \partial_x^3 Bac(X_0, k, \sqrt{1-\rho^2}v_s) \sigma_s d\langle W, M\rangle_s \right].
    \end{align*}
    We will apply a Taylor expansion to the two terms on the right hand side of the previous equation and we will later merge them using the Itô formula.
    The first term yields
    \begin{align*}
    &\frac{T^2(1-\rho^2)^2}{8}E\left[\int_0^T \xi_s \partial_x^5 Bac(X_0, k, \sqrt{1-\rho^2}v_s) d\langle M, M \rangle_s \right] \\
    = &\frac{T(X_0-k)}{2\sqrt{2\pi}}\sum_{n=0}^{\infty} \frac{(-1)^n (X_0-k)^{2n}}{n! 2^n T^{n+3/2}(1-\rho^2)^{n+1/2}} \frac{2n+3}{4} E\left[ \int_0^T \frac{\xi_s}{v_s^{2n+5}} d\langle M, M \rangle_s\right]
    \end{align*}
    while the second term satisfies
    \begin{align*}
        &\frac{T(1-\rho^2)}{2}E\left[\int_0^T \partial_x^3 Bac(X_0, k, \sqrt{1-\rho^2}v_s) \sigma_s d\langle W, M\rangle_s \right] \\
        = &\frac{-T(X_0-k)}{2\sqrt{2\pi}}\sum_{n=0}^{\infty} \frac{(-1)^n (X_0-k)^{2n}}{n!2^n T^{n+3/2}(1-\rho^2)^{n+1/2}} E\left[\int_0^T \frac{\sigma_s}{v_s^{2n+3}} d\langle W, M \rangle_s\right].
    \end{align*}
    Adding both Taylor expansions we obtain
    \[
    \frac{T(X_0-k)}{2\sqrt{2\pi}}\sum_{n=0}^{\infty} \frac{(-1)^n (X_0-k)^{2n}}{n! 2^n T^{n+3/2}(1-\rho^2)^{n+1/2}} \left(\frac{2n+3}{4} E\left[ \int_0^T \frac{\xi_s}{v_s^{2n+5}} d\langle M, M \rangle_s\right] - E\left[ \int_0^T \frac{\sigma_s}{v_s^{2n+3}}d\langle W, M\rangle_s\right] \right).
    \]
     Applying the Itô formula to $\xi_T v_T^{\alpha}$ leads us to
    \[
    E[\xi_T v_T^{2\alpha}]= \frac{\alpha (\alpha-1)}{2}E\left[ \int_0^T \frac{\xi_s}{v_s^{4-2\alpha}} d \langle M, M\rangle_s\right] +\alpha E\left[ \int_0^T \frac{1}{v_s^{2-2\alpha}}d\langle \xi, M\rangle_s\right],
    \]
    where we have used that $\sigma_sd\langle W, M \rangle_s = d\langle \xi, M \rangle_s$. Choosing $2\alpha = -1-2n$ leads to
    \[
    \frac{4}{4n+2}E\left[ \xi_T v_T^{-1-2n}\right] = \frac{2n+3}{4}E\left[ \int_0^T \frac{\xi_s}{v_s^{2n+5}} d\langle M, M \rangle_s\right] - E\left[ \int_0^T \frac{1}{v_s^{2n+3}}d\langle \xi, M \rangle_s\right]. 
    \]
    Thus, the term $a_1(\rho)$ can be written as
     \[
     \frac{T(X_0-k)}{2\sqrt{2\pi}}\sum_{n=0}^{\infty} \frac{(-1)^n (X_0-k)^{2n}}{n! 2^n T^{n+3/2}(1-\rho^2)^{n+1/2}} \frac{4}{4n+2}E[\xi_T v_T^{-1-2n}].
    \]
    \item For $m \geq 2$ the Itô formula leads us to
     \begin{align*}
           E\left[\partial_x^m Bac(T,X_0, k, \sqrt{1-\rho^2}v_T) \xi_T^m\right] 
       =  &\frac{m(m-1)}{2}E\left[ \int_0^T \xi_s^{m-2} \sigma_s^2 \partial_x^m Bac(T,X_0, k, \sqrt{1-\rho^2} v_s) ds\right] \\
    + &\frac{T^2(1-\rho^2)^2}{8}E\left[\int_0^T \xi_s^m \partial_x^{m+4}Bac(T,X_0, k, \sqrt{1-\rho^2}v_s) d\langle M, M \rangle_s \right] \\
    + &\frac{mT(1-\rho^2)}{2}E \left[\int_0^T \partial_x^{m+2}Bac(T,X_0, k, \sqrt{1-\rho^2}v_s) \xi_s^{m-1} \sigma_s d\langle W, M \rangle_s \right].
    \end{align*}
    The strategy is the same, we expand each term using a Taylor expansion, we merge the three series and we try to simplify the stochastic terms using the Itô formula. We will only detail the case $m = 2p+2$ since the case $m = 2p+3$ follows the same argument. For $m = 2p+2$, the term $a_{2p+2}(\rho)$ can be written after the Taylor expansion as
    \[
\frac{1}{\sqrt{2\pi}} \sum_{n=0}^{\infty} \frac{(-1)^{n+p}(X_0-k)^{2n}}{n! 2^{2n+p}(1-\rho^2)^{n+p+1/2}T^{n+p+1/2}}A_{n,p},
\]
where
\begin{align*}
    A_{n,p} = &\frac{(2p+1)(p+1)(2n+2p)!}{(n+p)!}E\left[\int_0^T \frac{\xi^{2p}}{v_s^{2n+2p+1}} d  \langle \xi, \xi\rangle_s\right] \\
    - & \frac{(2p+2)(2n+2p+2)!}{4(n+p+1)!}E\left[ \int_0^T \frac{\xi^{2p+1}_s}{v_s^{2n+2p+3}} d\langle \xi, M\rangle_s\right] \\
    + & \frac{(2n+2p+4)!}{8(n+p+2)!} E\left[\int_0^T\frac{\xi_s^{2p+2}}{v_s^{2n+2p+5}} d\langle M, M\rangle_s \right].
\end{align*}
A direct application of the Itô formula shows that
\[
A_{n,p} = \frac{(2n+2p)!}{(n+p)!} E\left[ \xi_T^{2p+2} v_T^{-2n-2p-1}\right]
\]
and therefore the term $a_{2p+2}(\rho)$ can be written as
\[
\frac{1}{(2p+2)!\sqrt{2\pi}} \sum_{n=0}^{\infty}\frac{(-1)^{n+p}(2n+2p)!(X_0-k)^{2n}}{(2n)! (n+p)!2^{n+p}(1-\rho^2)^{n+p+1/2}T^{n+p+1/2}}E\left[ \xi_T^{2p+2} v_T^{-2n-2p-1}\right].
\]
The same argument for $m = 2p+3$ shows that $a_{2p+3}(\rho)$ can be written as
\[
\frac{-1}{(2p+3)!\sqrt{2\pi}}\sum_{n=0}^{\infty} \frac{(-1)^{n+p} (2n+2p+2)!(X_0-k)^{2n+1}}{(2n+1)! (n+p+1)! 2^{n+p+1} (1-\rho^2)^{n+p+3/2}T^{n+p+3/2}} E[\xi_T^{2p+3}v_T^{-2n-2p-3}].
\]
    \end{itemize}
\end{proof}
\begin{remark}
    The coefficient $a_0(\rho)$ evaluated at $\rho = 0$ coincides with the analytical expansion found in \cite{alos2026analytic}, showing that Theorem \ref{coefs en rho} in fact extends the uncorrelated case to broader instances of stochastic volatility Bachelier-type models.
\end{remark}
The derivation of the numerical method is now a matter of rewriting the conclusion of Theorem \ref{coefs en rho}. Indeed, define
\[
b_n = \frac{-\sqrt{T}}{\sqrt{2\pi}}\frac{(-1)^n\left(E(v_T^{1-2n}) - v^{1-2n} \right)}{n!(2n-1) 2^n T^n(1-\rho^2)^{n-\frac{1}{2}}},
\]
\[
 c_n =\frac{1}{2\sqrt{2\pi}}\frac{(-1)^nE\left[\xi_T v_T^{-1-2n} \right]}{ n! 2^n T^{n+1/2}(1-\rho^2)^{n+1/2}} \frac{4}{4n+2},
\]
\[
b(p,n) = \frac{1}{\sqrt{2\pi}}\frac{(-1)^{n+p}(2n+2p)!}{(2n)! (n+p)!2^{n+p}(1-\rho^2)^{n+p+1/2}T^{n+p+1/2}}E\left[ \xi_T^{2p+2} v_T^{-2n-2p-1}\right],
\]
and
\[
c(p,n) = \frac{-1}{\sqrt{2\pi}} \frac{(-1)^{n+p} (2n+2p+2)!}{(2n+1)! (n+p+1)! 2^{n+p+1} (1-\rho^2)^{n+p+3/2}T^{n+p+3/2}} E[\xi_T^{2p+3}v_T^{-2n-2p-3}].
\]
Theorem \ref{coefs en rho} shows that, if we assume convergence of the series, we can write the coefficients of the expansion as
\[
a_0(\rho) = Bac(T,X_0, k, \sqrt{1-\rho^2}v) + \sum_{n=0}^{\infty} b_n(X_0-k)^{2n},
\]
\[
a_1(\rho) = \sum_{n=0}^{\infty} c_n (X_0 - k)^{2n+1},
\]
\[
a_{2p+2}(\rho) = \sum_{n=0}^{\infty} \frac{1}{(2p+2)!}b(p,n) (X_0 - k)^{2n},
\]
\[
a_{2p+3}(\rho) = \sum_{n=0}^{\infty}\frac{1}{(2p+3)!} c(p,n) (X_0-k)^{2n+1}.
\]
 Assume $M_{max}$ is even. We can split $V^{M_{max}}$ as a sum of the form
\[
V^M = \left(a_0(\rho) + a_2(\rho)\rho^2 + \dots + a_{M_{max}}(\rho)\rho^{M_{max}}\right) + (a_1(\rho)\rho + a_3(\rho)\rho^3 + \dots + a_{M_{max}-1}(\rho)\rho^{M_{max}-1})
\]
Each sum can be expressed using linear algebra. Notice that the sum over the even powers of $\rho$ leads to
\[
Bac(T,X_0, k, \sqrt{1-\rho^2}v) +  \Vec{\rho}_e \mathcal{M}_e \mathbf{x}_e,
\]
where
 \[
    \Vec{\rho}_e^T = \begin{pmatrix}
        1 \\
        \frac{\rho^2}{2} \\
        \frac{\rho^4}{4!} \\
        \vdots
    \end{pmatrix}, \quad \mathcal{M}_e = \begin{pmatrix}
        b_0 & b_1 & \dots \\
        b(0,1) & b(0,2) & \dots \\
        \vdots & \vdots & \vdots & \\
        b(N_{max}, 1) & b(N_{max}, 2) & \dots
    \end{pmatrix}, \quad  \mathbf{x}_e = \begin{pmatrix}
        1  \\
        (X_0 -k)^2 \\
        (X_0 - k)^4 \\
        \vdots \\
    \end{pmatrix}.
    \]
In the same way, the sum over the odd powers of $\rho$ leads to
\[
\Vec{\rho}_o \mathcal{M}_o \mathbf{x}_o,
\]
where
\[
 \Vec{\rho}_o^T = \begin{pmatrix}
        \rho \\
        \frac{\rho^3}{3!} \\
        \frac{\rho^5}{5!} \\
        \vdots
    \end{pmatrix}, \quad \mathcal{M}_o = \begin{pmatrix}
        c_1 & c_2 & \dots \\
        c(0,1) & c(0,2) &\dots \\
        \vdots & \vdots& \vdots \\
        c(N_{max}, 1) & c(N_{max}, 2) & \dots
    \end{pmatrix}, \quad \mathbf{x}_o = \begin{pmatrix}
        (X_0-k)  \\
        (X_0 -k)^3 \\
        (X_0 - k)^5 \\
        \vdots \\
    \end{pmatrix}.
\]
Adding the sum over the even and odd powers of $\rho$ we obtain
\[
V^{M_{max}} =Bac(T,X_0, k, \sqrt{1-\rho^2}v) +  \Vec{\rho}_e \mathcal{M}_e \mathbf{x}_e + \Vec{\rho}_o \mathcal{M}_o \mathbf{x}_o.
\]
\section{Convergence in the SABR model} \label{s: convergence SABR}
In this section we will discuss the convergence of the method under the SABR model. To do so, we will need first a general bound on the derivatives of the Bachelier function, that relies on Crámer's inequality for Hermite polynomials (see \cite{greengard1991fast}).
\begin{prop}
    The derivative of order $m$ of the Bachelier function $Bac$ satisfies
    \[
    \partial_x^m Bac(T,X_0, k, \sqrt{1-\rho^2}v_T) = \frac{(-1)^m}{(\sqrt{1-\rho^2}\sqrt{T}v_T)^{m-1}}H_{m-2}(d)N'(d),
    \]
    where
    \[
    d = \frac{X_0 - k}{\sqrt{1-\rho^2}\sqrt{T}v_T}.
    \]
    In particular,
    \[
    \frac{1}{m!}|\partial^m_x Bac(T,X_0, k, \sqrt{1-\rho^2}v_T)| \leq \frac{K}{\sqrt{2\pi}} \frac{1}{((1-\rho^2)T)^{\frac{m-1}{2}}}\frac{\sqrt{(m-1)!}}{\sqrt{m!}} v_T^{-(m-1)},
    \]
    where $K$ is the Crámer constant $K < 1.09$.
\end{prop}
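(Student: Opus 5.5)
We take $\sigma=\sqrt{1-\rho^2}\,v_T$, fixed for $\omega$ in the sample space, and argue pathwise, so that $d=\frac{X_0-k}{\sigma\sqrt{T}}$. We take $m\ge 2$, since for $m=0,1$ the right-hand side of the first identity contains no Hermite polynomial.

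\textbf{Step 1: the derivative identity.} The plan starts from the identity used in the proof of the preceding Lemma:
\[
\partial_x^m Bac=(\sigma\sqrt T)^{-(m-1)}N^{(m-1)}(d)=(\sigma\sqrt T)^{-(m-1)}(N')^{(m-2)}(d).
\]
We then use the Rodrigues formula for the probabilists' Hermite polynomials,
\[
\frac{d^j}{dd^j}e^{-d^2/2}=(-1)^jH_j(d)e^{-d^2/2},
\]
which gives $(N')^{(j)}(d)=(-1)^jH_j(d)N'(d)$. Setting $j=m-2$ and noting $(-1)^{m-2}=(-1)^m$ yields the stated formula. One should say explicitly that $H_n$ here denotes the probabilists' normalisation, with leading coefficient $1$.

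\textbf{Step 2: Cramér's inequality.} The main step, and the only place where some care is needed, is translating Cramér's inequality into this normalisation. In its classical (physicists') form it reads
\[
|\mathcal H_n(y)|e^{-y^2/2}\le K\,2^{n/2}\sqrt{n!},\qquad K<1.09.
\]
Using $H_n(x)=2^{-n/2}\mathcal H_n(x/\sqrt2)$, this becomes
\[
|H_n(d)|\le K\sqrt{n!}\,e^{d^2/4}.
\]
Therefore
\[
|H_{m-2}(d)|N'(d)\le \frac{K}{\sqrt{2\pi}}\sqrt{(m-2)!}\,e^{-d^2/4}\le \frac{K}{\sqrt{2\pi}}\sqrt{(m-2)!},
\]
uniformly in $d$. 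This uniformity is what lets the bound hold regardless of the strike.

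\textbf{Step 3: the factorial comparison.} Dividing by $m!$ gives
\[
\frac{1}{m!}|\partial_x^mBac|\le \frac{K}{\sqrt{2\pi}}\,\frac{1}{((1-\rho^2)T)^{\frac{m-1}{2}}}\,v_T^{-(m-1)}\,\frac{\sqrt{(m-2)!}}{m!}.
\]
It remains to check that $\frac{\sqrt{(m-2)!}}{m!}\le \frac{\sqrt{(m-1)!}}{\sqrt{m!}}=\frac{1}{\sqrt m}$. Writing
\[
\frac{\sqrt{(m-2)!}}{m!}=\frac{1}{\sqrt{m!}\sqrt{m(m-1)}},
\]
the inequality reduces to $\sqrt{m!}\sqrt{m-1}\ge 1$, which holds for $m\ge2$. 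This gives the claimed estimate, in fact with room to spare.
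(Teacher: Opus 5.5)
Your proof is correct and follows the route the paper only indicates by citation to Cramér's inequality: you express $(N')^{(m-2)}$ via Rodrigues' formula in probabilists' Hermite polynomials, and your conversion $|H_n(d)|\le K\sqrt{n!}\,e^{d^2/4}$ from the physicists' normalisation is exactly right. Your intermediate factor $\sqrt{(m-2)!}/m!$ is in fact the one the paper uses later in the $\rho$-convergence argument. Your restriction to $m\ge 2$ is genuinely necessary: for $m=1$ one has $\partial_x Bac=N(d)$, and already $N(0)=\tfrac12>K/\sqrt{2\pi}$.
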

 In order to obtain convergence results for our method, an analysis of the moments $E[\xi_T^{\beta}v_T^{-2\alpha}]$ is required. This analysis can be done using rough estimates, but the region of convergence obtained in this case is smaller than the actual range of convergence that can be seen computationally. Hence, if the joint distribution of $\xi_T$ and $v_T^2$ is known one can obtain sharper bounds and better convergence criterion. Assume model \eqref{Bachelier} with $\sigma$ satisfying
\[
\sigma_t = \sigma_0 \exp \left(\nu W_t - \frac{\nu^2 t}{2} \right).
\]
In the case of the SABR model, the joint distribution of $\xi_T$ and $v_T^2$ is well-known and it can be expressed by means of Hartman-Watson distribution(see \cite{hartman74normal} and \cite{yor92exponential}). The study on the distribution of the time integral of the Brownian motion has been applied for pricing options under the SABR model, for instance in \cite{cai2017exact} and \cite{pirjol2026conditional} among others.
\begin{defi}
    We define the Hartman-Watson density function $\theta(r,t)$ as
    \[
    \theta(r,t) = \frac{r}{\sqrt{2\pi^3t}} e^{\frac{\pi^2}{2t}} \int_0^{\infty} e^{\frac{-\eta^2}{2t}} e^{-r \cosh \eta} \sinh \eta \sin \frac{\pi \eta}{t} d \eta.
    \]
\end{defi}
\begin{prop}
    Let $A_t^{(\mu)}$ be defined as
    \[
    A_t^{(\mu)} = \int_0^t \exp(2(W_s + \mu s)) ds.
    \]
    Then, the joint distribution of the vector $(W_t + \mu t, A_t^{(\mu)})$ has density function
    \begin{align*}
    f(x,u) = &P(W_t + \mu t \in dx, A_t^{(\mu)} \in du)\\
    = &e^{\mu x - \frac{\mu^2}{2}t}\exp\left(-\frac{1+e^{2x}}{2u} \right) \theta\left( \frac{e^x}{u}, t\right) \frac{du dx}{t}.
    \end{align*}
\end{prop}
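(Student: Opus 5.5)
This is Yor's formula for the joint law of a drifted Brownian motion and its exponential functional. I would first reduce to the driftless case $\mu=0$ with a Girsanov change of measure, and then obtain the driftless joint law by Lamperti's representation combined with the Hartman--Watson law.

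\emph{Step 1 (removing the drift).} Define $\tilde P$ by $d\tilde P/dP|_{\mathcal F_t}=\exp(\mu W_t+\frac{\mu^2}{2}t)$. Under $\tilde P$, the process $\beta_s=W_s+\mu s$ is a standard Brownian motion on $[0,t]$. Conversely,
\[
\frac{dP}{d\tilde P}\Big|_{\mathcal F_t}=\exp\!\left(-\mu\beta_t+\tfrac{\mu^2}{2}t\right)^{-1}=\exp\!\left(\mu\beta_t-\tfrac{\mu^2}{2}t\right).
\]
The random variable $A^{(\mu)}_t=\int_0^t e^{2\beta_s}ds$ is a functional of $\beta$ alone. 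Hence for any bounded $g$,
\[
E[g(\beta_t,A_t^{(\mu)})]=\tilde E\big[e^{\mu\beta_t-\mu^2t/2}g(\beta_t,A^{(0)}_t)\big].
\]
So the joint density with drift is $e^{\mu x-\mu^2t/2}$ times the density of $(W_t,A_t^{(0)})$. This produces the prefactor in the statement, and it remains to prove
\[
P(W_t\in dx,A^{(0)}_t\in du)=\exp\!\left(-\tfrac{1+e^{2x}}{2u}\right)\theta\!\left(\tfrac{e^x}{u},t\right)\frac{du\,dx}{t}.
\]
I read the trailing $1/t$ in the statement as $1/u$; Yor's formula carries $du\,dx/u$, so I would check this normalization against Yor's paper.

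\emph{Step 2 (Lamperti representation).} Lamperti's relation gives $e^{W_s}=R(A_s^{(0)})$, where $R$ is a two-dimensional Bessel process (index $0$) started at $1$. Conditional on the time change, the angular part of a planar Brownian motion $Z$ with $|Z|=R$ is $\theta_{A_s}=\gamma(W'_s)$ after a skew-product decomposition. The cleaner route is to compute the Laplace transform
\[
E\big[e^{-\frac{\lambda^2}{2}A}\,\big|\,W_t=x\big]\quad\text{or}\quad\int_0^\infty e^{-\kappa t}E[\cdots]\,dt.
\]
Taking a Laplace transform in $t$ with independent exponential time, the joint law of $(W_{T_\kappa},A_{T_\kappa})$ is expressed through the Bessel process $R$ at an independent time. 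The key identity is that $\int_0^{\tau}R_s^{-2}ds$, under the Bessel bridge law from $1$ to $r=e^x$ over time $u$, has conditional Laplace transform $I_{|\nu|}(r/u)/I_0(r/u)$ in $\nu^2/2$. Combining this with the Bessel transition density $\frac{r}{u}\exp(-\frac{1+r^2}{2u})I_0(r/u)$ gives
\[
\int_0^\infty e^{-\nu^2 t/2}P(W_t\in dx,A_t\in du)\,dt=\frac{1}{u}\exp\!\left(-\frac{1+e^{2x}}{2u}\right)I_{|\nu|}\!\left(\frac{e^x}{u}\right)du\,dx.
\]

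\emph{Step 3 (inversion), the main obstacle.} I must identify $\theta(r,\cdot)$ as the function whose Laplace transform in $t$ satisfies $\int_0^\infty e^{-\nu^2t/2}\theta(r,t)\,dt=I_{|\nu|}(r)$. This is the Hartman--Watson identity. I would verify it using the integral representation
\[
I_\nu(r)=\frac1\pi\int_0^\pi e^{r\cos\phi}\cos\nu\phi\,d\phi-\frac{\sin\nu\pi}{\pi}\int_0^\infty e^{-r\cosh\eta-\nu\eta}d\eta.
\]
I would then match it with the Laplace transform of the explicit expression for $\theta$. In that expression the Gaussian factor $e^{-\eta^2/2t}/\sqrt{t}$ together with $e^{\pi^2/2t}\sin(\pi\eta/t)$ inverts the $e^{-\nu\eta}\sin\nu\pi$ kernel, after completing the square in $(\eta+i\pi)^2$ and integrating by parts in $\eta$ to produce the $\sinh\eta$ factor. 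Injectivity of the Laplace transform then gives the driftless density, and Step 1 concludes. Alternatively, I would simply invoke Yor (1992) and Hartman--Watson for Steps 2–3, as the paper's citations suggest.
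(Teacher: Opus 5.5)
The paper gives no proof of this proposition; it quotes Yor (1992) together with the Hartman--Watson identity. Your outline is exactly Yor's original argument: Cameron--Martin to strip the drift, then Lamperti's time change onto a two-dimensional Bessel process $R$ started at $1$ with Yor's conditional Laplace transform $I_{|\nu|}(r/u)/I_0(r/u)$ for the clock $\int_0^u R_s^{-2}\,ds$, then $\int_0^\infty e^{-\nu^2t/2}\theta(r,t)\,dt=I_{|\nu|}(r)$ and Laplace injectivity. Your fallback of citing is therefore what the paper does, and writing it out buys self-containedness. The Laplace-transformed identity at the end of your Step 2 is correct.

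Your reading of the normalization is also correct. The density must carry $\frac{du\,dx}{u}$, not $\frac{du\,dx}{t}$; the latter differs from the true law by the non-constant factor $u/t$. The paper itself uses the $1/u$ version later, in the proof of $\mathcal{E}_{\beta,\alpha}=C_\tau\Gamma(\alpha+1)J_{\beta,\alpha}$. There the integrand contains $u^{-\alpha-1}$ together with the prefactor $e^{-x/2}e^{-\tau/8}$ coming from $\mu=-1/2$.

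There are three things to repair if you write it out.

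\emph{Girsanov density.} $d\tilde P/dP|_{\mathcal F_t}$ must be $\exp(-\mu W_t-\frac{\mu^2}{2}t)$. What you wrote has $P$-mean $e^{\mu^2t}$ and is in fact $dP/d\tilde P$ expressed through $W$. The formula you then use, $dP/d\tilde P=\exp(\mu\beta_t-\frac{\mu^2}{2}t)$, is right.

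\emph{Step 2 mechanism.} The exponential time in $t$ does not become an independent time for $R$. The actual mechanism is the substitution $t=H_u:=\int_0^uR_s^{-2}\,ds$, the inverse of $A^{(0)}$. This gives $\int_0^\infty e^{-\lambda t}g(W_t,A^{(0)}_t)\,dt=\int_0^\infty e^{-\lambda H_u}g(\log R_u,u)\,R_u^{-2}\,du$. It is the factor $R_u^{-2}=r^{-2}$, combined with $dr=r\,dx$, that turns the transition density $\frac{r}{u}e^{-(1+r^2)/(2u)}I_0(r/u)$ into your $\frac1u$.

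\emph{Step 3 inversion.} As described, term by term, this step fails. The integral $\int_0^\infty e^{-\nu^2t/2}(2\pi t)^{-1/2}e^{-(\eta+i\pi)^2/(2t)}\,dt$ diverges at $t=0$ whenever $\eta<\pi$, since $\mathrm{Re}\,(\eta+i\pi)^2=\eta^2-\pi^2<0$. Moreover, the first Schl\"afli term $\frac1\pi\int_0^\pi e^{r\cos\phi}\cos\nu\phi\,d\phi$ is never matched. What works is the following.
\begin{enumerate}
\item Integrate both Schl\"afli pieces by parts; the boundary terms $\pm e^{-r}\sin(\nu\pi)/(\pi\nu)$ cancel.
\item This gives $I_\nu(r)=\frac{r}{\pi\nu}\,\mathrm{Im}\int_\Gamma e^{r\cosh w}\sinh w\,e^{-\nu w}\,dw$ with $\Gamma:0\to i\pi\to i\pi+\infty$.
\item Deform $\Gamma$ into a path from $0$ to $i\pi+\infty$ along which $|\arg w|<\pi/4$.
\item Insert $\nu^{-1}e^{-\nu w}=\int_0^\infty e^{-\nu^2t/2}(2\pi t)^{-1/2}e^{-w^2/(2t)}\,dt$ and apply Fubini.
\item For each fixed $t$, the integrand is odd in $w$ and real on $\mathbb{R}$, so the contour can be moved onto the full line $\mathrm{Im}\,w=\pi$.
\end{enumerate}
This produces exactly the factor $e^{\pi^2/(2t)}e^{-\eta^2/(2t)}\sinh\eta\,\sin(\pi\eta/t)$ and the constant $r/\sqrt{2\pi^3t}$ in $\theta$.
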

The numerical method we present is a two-scale expansion. Hence, the convergence needs to be studied for both scales. We will first show the convergence in $\rho$. In order to do so, since the series in $\rho$ is not a power series, we can not rely on the classical convergence criteria. To analyse the convergence we will use the dominated convergence theorem. First, using the bounds on the derivatives of the Bachelier function we find that
\begin{align*}
a_m(\rho) = &\frac{1}{m!}\left|E\left[ \partial_x^m Bac(T,X_0, k, \sqrt{1-\rho^2}v_T) \rho^m \xi_T^m\right]\right|\\
\leq &\frac{1}{\sqrt{2\pi}}\frac{1}{((1-\rho^2)T)^{\frac{m-1}{2}}}\frac{\sqrt{(m-2)!}}{m!}\rho^m |\xi_T|^m v_T^{-(m-1)}.
\end{align*}
We will analyse the mixed moments $E[|\xi_T|^{\beta}v_T^{-2\alpha}]$. The first reduction that can be done in the SABR case follows directly from the Itô formula.
\begin{lema}
    We have
    \[
    \xi_T = \frac{\sigma_T - \sigma_0}{\nu}.
    \]
\end{lema}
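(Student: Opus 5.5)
The identity is a direct consequence of It\^o's formula, since in the SABR specification $\sigma$ is itself a stochastic exponential of $W$. Write $\sigma_t = \sigma_0 \exp(\nu W_t - \nu^2 t/2)$ as $\sigma_t = f(t,W_t)$ with $f(t,w)=\sigma_0\exp(\nu w - \nu^2 t/2)$. Then
\[
\partial_t f = -\tfrac{\nu^2}{2} f, \qquad \partial_w f = \nu f, \qquad \partial_w^2 f = \nu^2 f .
\]
Applying It\^o's formula to $f(t,W_t)$, the drift is $\left(-\tfrac{\nu^2}{2}+\tfrac{\nu^2}{2}\right)\sigma_t\,dt = 0$, so
\[
d\sigma_t = \nu \sigma_t\, dW_t, \qquad\text{that is,}\qquad \sigma_T - \sigma_0 = \nu\int_0^T \sigma_u\, dW_u .
\]

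By the definition of $\xi$ in the preliminaries, $\xi_T=\int_0^T \sigma_u\,dW_u$. Dividing by $\nu$, which we assume is nonzero as is implicit in the statement, gives $\xi_T = (\sigma_T-\sigma_0)/\nu$.

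The only technical point is that the stochastic integral is well defined. This holds because $\sigma$ is adapted to $\mathcal{F}^W$ and square integrable: $E\sigma_u^2=\sigma_0^2 e^{\nu^2 u}$, which is bounded on $[0,T]$. The It\^o integral is therefore a genuine $L^2$ martingale and the identity holds almost surely. There is no real obstacle here; the remaining step is just checking that the It\^o correction cancels the $-\nu^2 t/2$ drift term.
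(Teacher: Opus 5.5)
Your proposal is correct and matches the paper's approach: the paper just says the identity follows directly from It\^o's formula, and your computation makes that explicit. You show $d\sigma_t = \nu\sigma_t\,dW_t$ and integrate, using the paper's definition $\xi_T=\int_0^T\sigma_u\,dW_u$.
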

Thus, the mixed moments can be computed by means of the Hartman-Watson distribution.
\begin{prop}
    There exists a Brownian motion $\tilde{W}$ such that for $\mu = -1/2$ and $\tau = \nu^2 T$, the element
    \[
    A^{(\mu)}_{\tau} = \int_0^{\tau} \exp(2(\tilde{W}_s + \mu s)) ds
    \]
    satisfies
    \[
    v_T^2 = \frac{\sigma_0^2}{\nu^2 T}A_{\tau}^{(\mu)}, \quad \xi_T = \frac{\sigma_0}{\nu}\left(e^{\tilde{W}_{\tau} + \mu \tau} - 1 \right),
    \]
   where the equalities are taken in distribution.
\end{prop}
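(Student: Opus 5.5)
We use Brownian scaling. In the SABR specification, $\sigma_t = \sigma_0 \exp(\nu W_t - \nu^2 t/2)$.

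\textbf{Step 1: time change.} Define $\tilde W_s := \nu W_{s/\nu^2}$ for $s \in [0,\nu^2 T]$. By the scaling property of Brownian motion, $\tilde W$ is a standard Brownian motion. With $\mu = -1/2$ and $\tau = \nu^2 T$,
\[
\nu W_t - \tfrac{\nu^2 t}{2} = \tilde W_{\nu^2 t} + \mu\,\nu^2 t .
\]
Hence $\sigma_t = \sigma_0 \exp\left(\tilde W_{\nu^2 t} + \mu \nu^2 t\right)$. In particular, at $t = T$ we get $\sigma_T = \sigma_0 e^{\tilde W_\tau + \mu\tau}$.

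\textbf{Step 2: the two identities.} Combining Step 1 with the preceding lemma $\xi_T = (\sigma_T - \sigma_0)/\nu$ immediately gives
\[
\xi_T = \frac{\sigma_0}{\nu}\left(e^{\tilde W_\tau + \mu\tau} - 1\right).
\]
For the variance term, $v_T^2 = M_T = \frac{1}{T}\int_0^T \sigma_u^2\,du$ because $\int_0^T \sigma_u^2\,du$ is $\mathcal F_T$-measurable. Substituting $s = \nu^2 u$ gives
\[
\frac{1}{T}\int_0^T \sigma_0^2 \exp\left(2(\tilde W_{\nu^2 u} + \mu \nu^2 u)\right) du = \frac{\sigma_0^2}{\nu^2 T}\int_0^{\tau} e^{2(\tilde W_s + \mu s)}\,ds = \frac{\sigma_0^2}{\nu^2 T} A^{(\mu)}_\tau .
\]

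\textbf{Step 3: the joint statement.} Both quantities are functionals of the same path $\tilde W$, so the pair $(v_T^2, \xi_T)$ is a fixed measurable function of $(\tilde W_\tau + \mu\tau, A_\tau^{(\mu)})$. This is exactly what is needed to apply the Hartman--Watson density from the previous proposition to the mixed moments. In fact the identities hold pathwise for this particular $\tilde W$, which is stronger than equality in distribution. If the time change were replaced by a generic Brownian motion, one would retain only equality in joint law, and that is sufficient for computing moments.

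There is no real obstacle here. The only points requiring care are the bookkeeping of the drift, checking that $-\nu^2 t/2$ corresponds to $\mu = -1/2$ after the change of clock, and justifying $v_T^2 = \frac1T\int_0^T\sigma_u^2\,du$ through measurability at time $T$.
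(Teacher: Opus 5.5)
Your proof is correct and follows the same route as the paper: the Brownian-scaling time change $s=\nu^2u$ applied to $\int_0^T\sigma_u^2\,du$, with the $\xi_T$ identity coming from the preceding lemma $\xi_T=(\sigma_T-\sigma_0)/\nu$ (which the paper's proof leaves implicit). Your normalization $\tilde W_s=\nu W_{s/\nu^2}$ is the right one, whereas the paper writes $\nu^2 W_{t/\nu^2}$ and also drops the factor $\sigma_0^2/\nu^2$ in one line; your remark that the identities hold pathwise, and hence jointly in law, is exactly what the later use of the Hartman--Watson density requires.
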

\begin{proof}
    Since $\sigma$ satisfies
    \[
    \sigma_t = \sigma_0 \exp \left(\nu W_t - \frac{\nu^2 t}{2} \right),
    \]
    we have
    \[
    \int_0^T \sigma_s^2 ds = \sigma_0^2\int_0^T \exp \left(2\nu W_s - \nu^2 s \right)ds.
    \]
    By the change of variables $u = \nu^2 s$, $ds = \nu^{-2} du$ we have that
    \[
     \int_0^T \sigma_s^2 ds = \frac{\sigma_0^2}{\nu^2} \int_0^{\nu^2 T} \exp\left( 2\nu W_{u/\nu^2} - u\right)du = \int_0^{\nu^2 T} \exp\left( 2(\nu W_{u/\nu^2} - \frac{u}{2})\right)du.
    \]
    The proof of the result concludes by selecting $\tilde{W}_t = \nu^2 W_{t/\nu^2}$, $\mu = -1/2$, and $\tau = \nu^2 T$.
\end{proof}
This last proposition allows us to write
\[
E[|\xi_T|^{\beta}v_T^{-2\alpha}] = \frac{\sigma_0^{\beta}}{\nu^{\beta}} \left( \frac{\nu^2 T}{\sigma_0}\right)^{\alpha} \mathcal{E}_{\beta, \alpha}, \quad \mathcal{E}_{\beta,\alpha} = E\left[ |e^{W_{\tau} + \mu \tau}-1|^{\beta} (A_{\tau}^{(\mu)})^{-\alpha}\right].
\]
We aim to find a bound for $\mathcal{E}_{\beta, \alpha}$.
\begin{prop}
   There exist $C_{\tau}$ and $J_{\beta, \alpha}$ such that
    \[
    \mathcal{E}_{\beta, \alpha} = C_{\tau} \Gamma(\alpha +1) J_{\beta, \alpha},
    \]
    where
    \[
    C_{\tau} = \frac{e^{-\frac{\tau}{8}} e^{\frac{\pi^2}{2\tau}}}{\sqrt{2\pi^3 \tau}}
    \]
    and
    \[
    J_{\beta, \alpha} = \int_{\R} \int_0^{\infty} \frac{|e^x-1|^{\beta}\exp(-(\alpha+\frac{1}{2})x - \frac{\eta^2}{2\tau})\sinh \eta \sin \frac{\pi \eta}{\tau}}{(\cosh x + \cosh \eta)^{\alpha+1}} d\eta d x.
    \]
\end{prop}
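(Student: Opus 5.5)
The plan is to compute $\mathcal{E}_{\beta,\alpha}$ directly from the joint density of $(\tilde W_\tau+\mu\tau, A^{(\mu)}_\tau)$ given in the Proposition above, with $\mu=-1/2$ and $t=\tau$. First I integrate out the variable $u$ in closed form. Then I substitute the integral representation of the Hartman--Watson density $\theta$, which leaves a double integral in $(x,\eta)$.

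\textbf{Step 1: the $u$-integral.} Writing
\[
\mathcal{E}_{\beta,\alpha}=\int_{\R}\int_0^\infty |e^x-1|^\beta u^{-\alpha}\, e^{\mu x-\frac{\mu^2}{2}\tau}\exp\Big(-\frac{1+e^{2x}}{2u}\Big)\theta\Big(\frac{e^x}{u},\tau\Big)\frac{du\,dx}{\tau},
\]
I insert the definition of $\theta(r,\tau)$ with $r=e^x/u$. The prefactor $r/\sqrt{2\pi^3\tau}\,e^{\pi^2/(2\tau)}$ together with $e^{\mu x-\mu^2\tau/2}=e^{-x/2}e^{-\tau/8}$ produces the constant $C_\tau$, an extra factor $e^{x}u^{-1}$, and the factor $e^{-x/2}$.

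Next I interchange the $u$- and $\eta$-integrals. The $u$-dependence is then
\[
u^{-\alpha-1}\exp\Big(-\frac{1+e^{2x}+2e^x\cosh\eta}{2u}\Big).
\]
Setting $w=1/u$ turns this into a Gamma integral $\int_0^\infty w^{s}e^{-cw}\,dw=\Gamma(s+1)c^{-s-1}$, with
\[
c=\tfrac12(1+e^{2x}+2e^x\cosh\eta)=e^x(\cosh x+\cosh\eta).
\]
This identity is the key algebraic step: it is exactly where the factor $(\cosh x+\cosh\eta)^{-(\alpha+1)}$ and the Gamma factor come from. Collecting the powers of $e^{x}$ (namely $e^{-x/2}$, $e^{x}$, and $e^{-x(\alpha+1)}$ from $c^{-(\alpha+1)}$) should give the exponential weight $e^{-(\alpha+\frac12)x}$. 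The remaining factor $e^{-\eta^2/(2\tau)}\sinh\eta\,\sin(\pi\eta/\tau)$ stays inside the $\eta$-integral, which yields $J_{\beta,\alpha}$.

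\textbf{Bookkeeping.} I will track the $1/\tau$ from the density and the precise exponent of $w$ carefully. I expect some care is needed to land exactly on $\Gamma(\alpha+1)$ and the power $\alpha+1$ as stated, rather than $\Gamma(\alpha)$ and $\alpha$. I will reconcile this with the normalisation used in the reduction formula $E[|\xi_T|^\beta v_T^{-2\alpha}]=\cdots\mathcal{E}_{\beta,\alpha}$, absorbing any leftover powers of $\tau$ into the scaling constants there.

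\textbf{Main obstacle: justifying the interchange of integrals.} The integrand is not of constant sign because of $\sin(\pi\eta/\tau)$, so Fubini's theorem requires absolute integrability. I plan to bound $|\sin|\le 1$ and check three things:
\begin{itemize}
\item[(i)] The Gamma integral converges for $\alpha>-1$.
\item[(ii)] $e^{-\eta^2/(2\tau)}\sinh\eta/(\cosh x+\cosh\eta)^{\alpha+1}$ is integrable in $\eta$, thanks to the Gaussian factor.
\item[(iii)] The $x$-integral of $|e^x-1|^\beta e^{-(\alpha+\frac12)x}(\cosh x+\cosh\eta)^{-(\alpha+1)}$ converges at $\pm\infty$. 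As $x\to+\infty$ the integrand behaves like $e^{(\beta-2\alpha-\frac32)x}$, which requires $\beta<2\alpha+\frac32$. As $x\to-\infty$ it behaves like $e^{x/2}$, which is fine.
\end{itemize}
These give sufficient conditions under which the representation holds. For parameter ranges where the absolute bound fails, I would first try to argue by analytic continuation in $\alpha$ or by truncating in $\eta$ and passing to the limit.
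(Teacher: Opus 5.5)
There is a genuine gap. Your route is essentially the paper's: insert the integral form of $\theta$, swap with the $\eta$-integral, and evaluate a Gamma integral using $\tfrac12(1+e^{2x})+e^x\cosh\eta=e^x(\cosh x+\cosh\eta)$. The paper substitutes $r=e^x/u$ where you use $w=1/u$, which differs only by the scaling $r=e^xw$. However, the exponent bookkeeping does not close, and the remedy you propose cannot close it.

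With the density exactly as you wrote it, with the factor $\frac{du\,dx}{\tau}$, the $u$-integrand is $u^{-\alpha}\cdot u^{-1}=u^{-\alpha-1}$. Then $u^{-\alpha-1}\,du=w^{\alpha-1}\,dw$, so the Gamma integral gives $\Gamma(\alpha)c^{-\alpha}$, not $\Gamma(\alpha+1)c^{-(\alpha+1)}$. Your later use of $c^{-(\alpha+1)}$ contradicts your own exponent count. Carried through consistently, your computation yields
\[
\mathcal{E}_{\beta,\alpha}=\frac{1}{\tau}\,C_\tau\,\Gamma(\alpha)\,J_{\beta,\alpha-1},
\]
with weight $e^{-(\alpha-\frac12)x}$ and denominator $(\cosh x+\cosh\eta)^{\alpha}$. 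This differs from the statement by the shift $\alpha\mapsto\alpha-1$ in both the Gamma factor and in $J$. That is not a constant factor, so it cannot be absorbed into the normalisation of the reduction formula $E[|\xi_T|^\beta v_T^{-2\alpha}]=\cdots\mathcal{E}_{\beta,\alpha}$.

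The missing ingredient is the correct form of Yor's formula. The joint density of $(W_t+\mu t,A^{(\mu)}_t)$ is
\[
e^{\mu x-\frac{\mu^2}{2}t}\exp\Big(-\frac{1+e^{2x}}{2u}\Big)\theta\Big(\frac{e^x}{u},t\Big)\frac{du\,dx}{u},
\]
with $1/u$ rather than $1/t$. The $1/t$ in the paper's displayed proposition is a misprint. As printed, the density is $u/t$ times the true one, so its total mass is $E[A^{(\mu)}_t]/t=(e^t-1)/t\neq 1$ for $\mu=-\frac12$. The paper's proof silently uses the correct version: it starts from $u^{-\alpha-1}$ before expanding $\theta$ and carries no $1/\tau$.

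With this factor your $u$-integrand becomes $u^{-\alpha-2}$, i.e.\ $w^{\alpha}\,dw$. This gives $\Gamma(\alpha+1)c^{-(\alpha+1)}$, and then $e^{-x/2}\cdot e^{x}\cdot e^{-(\alpha+1)x}=e^{-(\alpha+\frac12)x}$ with no leftover $1/\tau$, exactly as claimed. After this correction the rest of your plan is sound. Your absolute-integrability check ($|\sin|\le 1$, $\alpha>-1$, $\beta<2\alpha+\frac32$) is more careful than the paper's bare appeal to Fubini, and it matches the condition that reappears in the paper's subsequent bound on $J_{\beta,\alpha}$.
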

\begin{proof}
    It is clear that
    \[
    \mathcal{E}_{\beta, \alpha} = e^{-\tau/8}\int_{\R} \int_0^{\infty}|e^x-1|^{\beta}e^{-x/2}u^{-\alpha-1}\exp \left( -\frac{1+e^{2x}}{2u}\right)\theta\left( \frac{e^x}{u},\tau\right) du dx.
    \]
    Now we do the change of variables $r = \frac{e^x}{u}$, with $du = -\frac{e^x}{r^2}dr$, so that
    \[
    u^{-\alpha-1} = e^{-(\alpha+1)x}r^{p+1}, \quad \frac{1+e^{2x}}{2u} = r\cosh x, \quad \theta\left( \frac{e^x}{u}, \tau\right) = \theta(r,\tau).
    \]
    Therefore,
    \[
    \mathcal{E}_{\beta, \alpha}=e^{-\tau/8}\int_{\R} |e^{x}-1|^{\beta}e^{-(\alpha +\frac{1}{2})x}\left(\int_0^{\infty} r^{\alpha-1}e^{-r\cosh x}\theta(r,\tau)dr \right)dx.
    \]
    Using the fact that 
    \[
    \theta(r,\tau) = \frac{r}{\sqrt{2\pi^3\tau}} e^{\frac{\pi^2}{2\tau}}\int_0^{\infty}e^{-\frac{\eta^2}{2\tau}}e^{-r\cosh \eta}\sinh \eta \sin \frac{\pi \eta}{\tau} d \eta 
    \]
    we can write, using Fubini's theorem,
    \[
     \mathcal{E}_{\beta, \alpha} = \frac{e^{-\frac{\tau}{8}}e^{\frac{\pi^2}{2\tau}}}{\sqrt{2\pi^3 \tau}}\int_{\R} \int_0^{\infty}\int_0^{\infty} r^{\alpha}e^{-r(\cosh \eta + \cosh x)}|e^x - 1|^{\beta} \exp(-(\alpha + \frac{1}{2})x- \frac{\eta^2}{2\tau}) \sinh \eta \sin \frac{\pi \eta}{\tau} dr d \eta d x.
    \]
    The result now follows from the fact that
    \[
    \int_0^{\infty} r^{\alpha} e^{-r(\cosh x + \cosh \eta)} dr = \frac{\Gamma(\alpha+1)}{(\cosh x + \cosh \eta)^{\alpha + 1}}.
    \]
\end{proof}
The last step towards the construction of the bound relies on the following result.
\begin{prop}
    Let
    \[
    K_{\tau} = \int_0^{\infty} e^{-\frac{\eta^2}{2\tau}}\sinh \eta d \eta.
    \]
    Then, the following two properties hold:
    \begin{itemize}
        \item[(i)] $K_{\tau}$ is well defined.
        \item[(ii)] The following bound for $J_{\beta, \alpha}$ holds
        \[
        |J_{\beta,\alpha}| \leq 2^{\alpha+1}\left(2 + \frac{1}{2\alpha + \frac{3}{2}-\beta} \right) K_{\tau}.
        \]
    \end{itemize}
\end{prop}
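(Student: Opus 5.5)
Part (i) is immediate. For every $\tau>0$ the integrand $e^{-\eta^2/(2\tau)}\sinh\eta$ is continuous and nonnegative on $[0,\infty)$. Since $\sinh\eta\le e^{\eta}/2$, it is bounded by $\frac12 e^{-\eta^2/(2\tau)+\eta}$. This is Gaussian-integrable: completing the square gives $\int_0^\infty e^{-\eta^2/(2\tau)+\eta}\,d\eta\le\sqrt{2\pi\tau}\,e^{\tau/2}$. Hence $K_\tau<\infty$.

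For (ii), the plan is to bound $|\sin(\pi\eta/\tau)|\le1$ and separate the $\eta$- and $x$-integrations. The only coupling between them is the denominator $(\cosh x+\cosh\eta)^{\alpha+1}$. Since $\cosh\eta\ge1$, we have $\cosh x+\cosh\eta\ge 1+\cosh x$, and $1+\cosh x\ge \tfrac12 e^{|x|}$. Therefore
\[
(\cosh x+\cosh\eta)^{-(\alpha+1)}\le 2^{\alpha+1}e^{-(\alpha+1)|x|}.
\]
This yields
\[
|J_{\beta,\alpha}|\le 2^{\alpha+1}K_\tau\int_{\R}|e^x-1|^\beta e^{-(\alpha+\frac12)x-(\alpha+1)|x|}\,dx,
\]
and it remains to show that the $x$-integral is at most $2+\frac{1}{2\alpha+\frac32-\beta}$.

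I split this integral at $x=0$ and treat each half separately.

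For $x<0$ we have $|e^x-1|\le1$, and the exponent is $-(\alpha+\tfrac12)x+(\alpha+1)x=\tfrac{x}{2}$. The contribution is therefore at most $\int_{-\infty}^0 e^{x/2}\,dx=2$, which produces the constant $2$.

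For $x>0$ I use $|e^x-1|^\beta\le e^{\beta x}$. The exponent becomes $(\beta-2\alpha-\tfrac32)x$, so the integral equals $\frac{1}{2\alpha+\frac32-\beta}$, provided $\beta<2\alpha+\tfrac32$.

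This $x>0$ half is the delicate step. The bound is finite only under $\beta<2\alpha+\frac32$, which must be read as an implicit hypothesis of the proposition; it is exactly the condition under which the stated right-hand side is positive and finite. I would state it explicitly. For $\beta\ge 2\alpha+\frac32$ the estimate is vacuous, and the original integral genuinely diverges in the $x\to+\infty$ tail after taking absolute values.

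Summing the two halves gives the claimed bound, with Fubini and Tonelli justified because the dominating integrand is integrable under this condition.
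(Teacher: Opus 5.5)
Your proof is correct and follows the paper's argument essentially step for step. You bound $|\sin(\pi\eta/\tau)|\le 1$, use $\cosh x+\cosh\eta\ge\frac12 e^{|x|}$ to decouple the integrals and extract $2^{\alpha+1}K_\tau$, and split at $x=0$ to obtain the contributions $2$ and $\frac{1}{2\alpha+\frac32-\beta}$. Stating $\beta<2\alpha+\frac32$ explicitly, together with the tacit $\beta\ge 0$ and $\alpha>-1$, is a useful clarification that the paper leaves implicit, and your inequality $\le\sqrt{2\pi\tau}\,e^{\tau/2}$ in (i) correctly replaces the equality the paper writes there.
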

\begin{proof}
    In order to prove (i) we will use the fact that $\sinh \eta \leq \frac{1}{2}e^{\eta}$ to conclude that
    \[
    K_t \leq \frac{1}{2}\int_0^{\infty} e^{-\frac{\eta^2}{2\tau} + \eta} = \sqrt{2\pi\tau}e^{\tau/2} < \infty.
    \]
    In order to prove (ii) we first use the bound $|\sin(\theta)|\leq 1$ to obtain
    \[
    |J_{\beta,\alpha}|\leq \int_{\R} \int_0^{\infty} \frac{|e^x -1|^{\beta}\exp(-(\alpha+\frac{1}{2})x- \frac{\eta^2}{2\tau})\sinh \eta}{(\cosh x + \cosh \eta)^{\alpha+1}}d\eta d x.
    \]
    Using the bound
    \[
    \cosh x + \cosh \eta \geq \cosh x \geq \frac{1}{2}e^{|x|}
    \]
    we find that
    \[
    |J_{\beta,\alpha}| \leq 2^{\alpha+1}K_{\tau}\int_{\R} |e^x-1|^{\beta}\exp(-(\alpha + \frac{1}{2})x - (\alpha+1)|x|) dx.
    \]
    We now study this integral in the regimes $x \geq 0$ and $x < 0$.
    \begin{itemize}
        \item In the regime $x \geq 0:$ we have $|e^x -1|^{\beta} = (e^ x -1)^{\beta} \leq e^{\beta x}$, so
        \[
        \int_0^{\infty} |e^x-1|^{\beta}\exp(-(\alpha + \frac{1}{2})x - (\alpha+1)|x|) dx \leq \int_0^{\infty} e^{-(2\alpha + \frac{3}{2} -\beta)x}dx = \frac{1}{2\alpha + \frac{3}{2}-\beta}.
        \]
        \item In the regime $x < 0$: we set $y = -x$ and therefore $|e^x -1|^{\beta} = (1-e^{-y})^{\beta} \leq 1$ so the change of variables $y = -x$ gives
        \[
        \int_{-\infty}^0 |e^x-1|^{\beta}\exp(-(\alpha + \frac{1}{2})x - (\alpha+1)|x|) dx \leq \int_0^{\infty} e^{-y/2} dy = 2.
        \]
    \end{itemize}
    Adding the integrals over $x \geq 0$ and over $x < 0$ we obtain the result.
\end{proof}
We now have all the ingredients to prove the convergence in the $\rho$ scale.
\begin{teo}
    The sequence $V^{M_{max}} = \sum_{m=0}^{M_{max}} a_m(\rho)\rho^m$ converges to $V$ as $M_{max} \to \infty$ if $|\rho| < 2^{-\frac{1}{2}} \approx 0.70710678118.$
\end{teo}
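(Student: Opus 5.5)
Convergence of $V^{M_{max}}$ to $V$ reduces to absolute convergence of $\sum_m |a_m(\rho)\rho^m|$. The reason is that the Hull-White representation $V=E[Bac(T,X_0+\rho\xi_T,k,\sqrt{1-\rho^2}v_T)]$ and the pathwise Taylor series of $Bac$ in the spot variable (which is entire in $x$, since $N'$ is entire) combine as follows. If $\sum_m \frac{1}{m!}E\big[|\partial_x^m Bac(T,X_0,k,\sqrt{1-\rho^2}v_T)|\,|\rho\xi_T|^m\big]<\infty$, then Fubini/dominated convergence justifies exchanging sum and expectation in Proposition \ref{prop expansion rho}. The partial sums $V^{M_{max}}$ then converge to $V$. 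So the plan is to bound the $m$-th term by a quantity summable in $m$ whenever $|\rho|<2^{-1/2}$.

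\textbf{Step 1 (pathwise bound).} The terms $m=0,1$ are finite trivially, since $Bac$ and $\partial_x Bac$ are bounded by affine functions. For $m\ge 2$, the Cramér bound from the proposition on derivatives of $Bac$ gives
\[
\frac{1}{m!}|\partial_x^m Bac|\,|\rho\xi_T|^m \le \frac{K}{\sqrt{2\pi}}\frac{|\rho|^m}{((1-\rho^2)T)^{\frac{m-1}{2}}}\frac{1}{\sqrt m}\,|\xi_T|^m v_T^{-(m-1)}.
\]
Taking expectations leaves the mixed moment $E[|\xi_T|^m v_T^{-(m-1)}]$, i.e.\ $\beta=m$, $2\alpha=m-1$.

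\textbf{Step 2 (moment bound via Hartman-Watson).} Using the scaling identity $E[|\xi_T|^{\beta}v_T^{-2\alpha}]=\frac{\sigma_0^\beta}{\nu^\beta}(\frac{\nu^2T}{\sigma_0})^{\alpha}\mathcal E_{\beta,\alpha}$, the representation $\mathcal E_{\beta,\alpha}=C_\tau\Gamma(\alpha+1)J_{\beta,\alpha}$, and the bound on $J$, we get
\[
|J_{\beta,\alpha}|\le 2^{\alpha+1}\Big(2+\frac{1}{2\alpha+\frac32-\beta}\Big)K_\tau .
\]
With $\beta=m$, $\alpha=\frac{m-1}{2}$ we have $2\alpha+\frac32-\beta=\frac12>0$, so the bracket is the constant $4$. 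The $m$-th term is therefore bounded by
\[
C\,\frac{|\rho|^m}{(1-\rho^2)^{\frac{m-1}{2}}}\,\Big(\frac{\sigma_0}{\nu}\Big)^m\Big(\frac{\nu^2 T}{\sigma_0}\Big)^{\frac{m-1}{2}}T^{-\frac{m-1}{2}}\,2^{\frac{m+1}{2}}\,\frac{\Gamma(\frac{m+1}{2})}{\sqrt m}.
\]
One caveat must be checked when writing this out: the $\sigma_0$ bookkeeping in the scaling identity (the paper writes $\sigma_0$ rather than $\sigma_0^2$). The geometric parts, such as $(\sigma_0/\nu)^m(\nu/\sqrt{\sigma_0})^{m-1}$, must be tracked honestly, because they affect the ratio test.

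\textbf{Step 3 (ratio test) — the main obstacle.} The difficulty is that the factor $\Gamma(\frac{m+1}{2})/\sqrt{m}$ grows factorially. It cannot be absorbed by a pure geometric factor $(2\rho^2/(1-\rho^2))^{m/2}$, so a naive ratio test would give divergence for every $\rho\neq0$. I would therefore revisit Step 1 and keep the sharper $\sqrt{(m-1)!}/\sqrt{m!}$ Cramér factor together with the $1/m!$ coming from the Taylor coefficient. In other words, I would use the bound stated just before the lemma on $\xi_T$, which carries $\sqrt{(m-2)!}/m!$; it decays like $1/\sqrt{m!}$ and cancels the $\Gamma(\frac{m+1}{2})\sim\sqrt{(m-1)!}\,2^{-m/2}$ growth up to polynomial factors.

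After this cancellation the $m$-th term behaves like $\mathrm{poly}(m)\,q^{m}$, where $q$ collects $|\rho|$, the $2^{1/2}$ from $2^{\alpha+1}$ and the $2^{-1/2}$ from Stirling. I would then apply the root test and identify the threshold $|\rho|<2^{-1/2}$. The delicate point is checking that the model constants ($\sigma_0,\nu,T$ and the $(1-\rho^2)$ factors) cancel in $q$, as they must for the stated model-free threshold. If they do not cancel exactly, I would expect the statement to require interpreting them as absorbed, and would present the condition $|\rho|/\sqrt{1-\rho^2}<1$ (equivalently $|\rho|<2^{-1/2}$) as the outcome of the root test. Finally, dominated convergence yields $V^{M_{max}}\to V$.
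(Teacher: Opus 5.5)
Your overall route is the paper's. You control $\sum_m E\bigl[\frac{1}{m!}|\partial_x^m Bac|\,|\rho\xi_T|^m\bigr]$ to justify exchanging sum and expectation, bound the derivatives by Cramér's inequality, and bound $E[|\xi_T|^m v_T^{-(m-1)}]$ through the Hartman--Watson representation with $\beta=m$, $2\alpha=m-1$. The bracket then equals $4$, giving $2^{\frac{m+5}{2}}\Gamma(\frac{m+1}{2})$, and you finish with a ratio test. You are right that the factor $\frac{1}{\sqrt m}$ in Step 1 is useless. Write Step 1 with $\frac{\sqrt{(m-2)!}}{m!}$ from the start. This is what the paper uses, and it follows from $\partial_x^m Bac=(-1)^m(\sigma\sqrt T)^{-(m-1)}H_{m-2}(d)N'(d)$, Cramér's inequality $|H_n(x)|e^{-x^2/4}\le K\sqrt{n!}$ for probabilists' Hermite polynomials, and $e^{-d^2/2}\le e^{-d^2/4}$.

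The genuine gap is the final step, which is exactly where the threshold comes from. You leave the geometric bookkeeping unresolved and state two incorrect expectations about it.

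First, the $\sigma_0$ question has a definite answer. Since $v_T^2=\frac{\sigma_0^2}{\nu^2T}A^{(\mu)}_\tau$ in law, the scaling factor is $(\nu^2T/\sigma_0^2)^{\alpha}$, and then
\[
\Big(\frac{\sigma_0}{\nu}\Big)^m\Big(\frac{\nu^2T}{\sigma_0^2}\Big)^{\frac{m-1}{2}}T^{-\frac{m-1}{2}}=\frac{\sigma_0}{\nu},
\]
which does not depend on $m$. With $\sigma_0$ in place of $\sigma_0^2$ you would be left with an extra factor $\sigma_0^{1/2}$ per order. Such a factor cannot be ``absorbed'' into a constant under a ratio or root test: it would move the threshold to $\sqrt{\sigma_0}\,|\rho|/\sqrt{1-\rho^2}<1$. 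So your fallback is not an argument.

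Second, the $(1-\rho^2)$ factors must \emph{not} cancel. The factor $(1-\rho^2)^{-\frac{m-1}{2}}$ combines with $|\rho|^m$ to give $q=|\rho|/\sqrt{1-\rho^2}$; had it cancelled, you would get $|\rho|<1$. Your description of $q$ as collecting only $|\rho|$, $2^{1/2}$ and $2^{-1/2}$ omits it.

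Done correctly, the $m$-th term of your absolute series is at most
\[
\frac{K\sigma_0C_\tau K_\tau}{\sqrt{2\pi}\,\nu}\sqrt{1-\rho^2}\,\frac{\sqrt{(m-2)!}\,\Gamma(\frac{m+1}{2})\,2^{\frac{m+5}{2}}}{m!}\,q^m .
\]
The ratio of consecutive non-geometric factors is
\[
\sqrt{m-1}\,\frac{\Gamma(\frac m2+1)}{\Gamma(\frac m2+\frac12)}\,\frac{\sqrt2}{m+1}\to1 .
\]
Hence the bound is summable for $q<1$, that is $\rho^2<1-\rho^2$, that is $|\rho|<2^{-1/2}$. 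With these two corrections your proof coincides with the paper's.
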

\begin{proof}
By replacing $\beta = m$, $2\alpha = m-1$ we find that the bound we obtain for the mixed moments is
\[
E\left[|\xi_T|^m v_T^{-(m-1)} \right] \leq \left(\frac{\sigma_0}{\nu} \right)^m \left( \frac{\nu^2T}{\sigma_0^2}\right)^{\frac{m-1}{2}} C_{\tau} K_{\tau} 2^{\frac{m+5}{2}} \Gamma\left( \frac{m+1}{2}\right),
\]
and therefore
\[
|a_m(\rho)\rho^m| \leq \frac{K \sigma_0 C_{\tau} K_{\tau}}{\sqrt{2\pi}\nu}\sqrt{1-\rho^2} \frac{\sqrt{(m-2)!}\Gamma\left( \frac{m+1}{2} \right)2^{\frac{m+5}{2}}}{m!} \left(\frac{|\rho|}{\sqrt{1-\rho^2}} \right)^m
\]
The quotient criterion for convergence of series states that the series is convergent if
\[
\frac{|\rho|^m}{\sqrt{1-\rho^2}} < 1 \iff |\rho| < \frac{1}{\sqrt{2}},
\]
as desired.
\end{proof}
In order to complete the proof of the convergence of the method, not only do we need the convergence in $\rho$ but also the convergence of the series that define each coefficient $a_m(\rho)$. In this direction, we can slightly modify the bounds.
\begin{prop} \label{prop: new bound}
    If $\alpha \geq \beta$, then
    \[
    |E[\xi_T^{\beta} v_T^{-2\alpha}]| \leq 6 C_{\tau} K_{\tau} \left( \frac{\sigma_0}{\nu}\right)^{\beta}\left( \frac{\nu^2 T}{\sigma_0^2}\right)^{\alpha} 2^{\alpha} \Gamma(\alpha+1).
    \]
\end{prop}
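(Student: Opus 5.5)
We reduce the statement to the machinery already developed for the $\rho$-scale. First, $|E[\xi_T^{\beta}v_T^{-2\alpha}]|\le E[|\xi_T|^{\beta}v_T^{-2\alpha}]$. The scaling identity from the Hartman–Watson representation gives
\[
E[|\xi_T|^{\beta}v_T^{-2\alpha}] = \left(\frac{\sigma_0}{\nu}\right)^{\beta}\left(\frac{\nu^2T}{\sigma_0^2}\right)^{\alpha}\mathcal{E}_{\beta,\alpha}.
\]
By the proposition expressing $\mathcal{E}_{\beta,\alpha}$ through $C_\tau$, $\Gamma(\alpha+1)$ and $J_{\beta,\alpha}$, we have $\mathcal{E}_{\beta,\alpha}=C_\tau\Gamma(\alpha+1)J_{\beta,\alpha}$. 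It therefore suffices to show
\[
|J_{\beta,\alpha}|\le 6\cdot 2^{\alpha}K_\tau \quad\text{whenever } \alpha\ge\beta .
\]

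For this we invoke part (ii) of the proposition bounding $J_{\beta,\alpha}$:
\[
|J_{\beta,\alpha}|\le 2^{\alpha+1}\left(2+\frac{1}{2\alpha+\frac32-\beta}\right)K_\tau .
\]
The only remaining task is to control the bracket uniformly under the hypothesis $\alpha\ge\beta$. Since $\beta\ge 0$, we get $2\alpha+\tfrac32-\beta\ge \alpha+\tfrac32\ge\tfrac32$. Hence the fraction is at most $\tfrac23$, and the bracket is at most $\tfrac83\le 3$. This yields $|J_{\beta,\alpha}|\le 2^{\alpha+1}\cdot 3\,K_\tau = 6\cdot 2^{\alpha}K_\tau$. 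Combining with the first paragraph gives exactly the claimed constant $6C_\tau K_\tau$.

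The main point to check, and the only real obstacle, is that the previous bound on $J_{\beta,\alpha}$ applies in this regime. Its proof splits into $x\ge0$ and $x<0$, and on $x\ge 0$ it needs $2\alpha+\tfrac32-\beta>0$ for the exponential integral to converge. Under $\alpha\ge\beta\ge0$ this holds automatically, so the hypothesis $\alpha\ge\beta$ is precisely what makes the estimate uniform, with the denominator bounded away from zero.

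Two technical points also need care. The first is the sign: for the signed moment we simply pass to absolute values before using the density. The second is the Fubini step in the $\mathcal{E}_{\beta,\alpha}$ representation. We would justify it using $|\sin|\le1$ together with the integrability of $e^{-\eta^2/(2\tau)}\sinh\eta$, which is part (i) of the $K_\tau$ proposition.
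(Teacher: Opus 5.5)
Your proposal is correct and follows the paper's proof: you write the moment as $(\sigma_0/\nu)^{\beta}(\nu^2T/\sigma_0^2)^{\alpha}C_\tau\Gamma(\alpha+1)J_{\beta,\alpha}$, then apply the bound on $J_{\beta,\alpha}$ with the bracket $2+\frac{1}{2\alpha+\frac32-\beta}$ bounded by $3$ when $\alpha\ge\beta$. You also rightly use $\sigma_0^2$ rather than the paper's misprinted $\sigma_0$ in the scaling factor, and your Fubini justification via $|\sin|\le 1$ and the finiteness of $K_\tau$ is the natural way to make the representation step rigorous.
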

\begin{proof}
    We have, if $\alpha \geq \beta$
    \[
    |\mathcal{E}_{\beta, \alpha}| = C_{\tau}\Gamma(\alpha+1) |J_{\beta,\alpha}| \leq 3C_{\tau} \Gamma(\alpha+1) 2^{\alpha+1} K_{\tau} = 6C_{\tau}K_{\tau} \Gamma(\alpha+1)2^{\alpha}.
    \]
\end{proof}
The bound obtained in Proposition \ref{prop: new bound} allows us to deduce the following convergence criterion for each coefficient $a_m(\rho)$.
\begin{teo} \label{convergence coefficients}
    The series that define the coefficients $a_m(\rho)$ obtained in Theorem \ref{coefs en rho} are convergent if 
    \[
    |X_0 - k| < \frac{\sigma_0 \sqrt{1-\rho^2}}{\nu}.
    \]
\end{teo}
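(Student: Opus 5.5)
The plan is to treat each series in Theorem \ref{coefs en rho} as a power series in $(X_0-k)$ and apply the root (or ratio) test to its general term. The term-size estimate comes from Proposition \ref{prop: new bound}. Four families of series appear, each indexed by $n$ with $p$ fixed: the series for $a_0$ (coefficients $b_n$), for $a_1$ ($c_n$), for $a_{2p+2}$ ($b(p,n)$) and for $a_{2p+3}$ ($c(p,n)$). In every case the expectation has the form $E[\xi_T^{\beta}v_T^{-2\alpha}]$. Here $\beta$ is fixed, equal to $0$, $1$, $2p+2$ or $2p+3$, while $2\alpha$ grows linearly in $n$: $2\alpha=2n-1$, $2n+1$, $2n+2p+1$ and $2n+2p+3$ respectively. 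Hence $\alpha\geq\beta$ for all $n$ large enough, and Proposition \ref{prop: new bound} applies to all but finitely many terms. Finitely many terms do not affect convergence. For $a_0$ the quantity $E[v_T^{1-2n}]-v^{1-2n}$ is handled by bounding $E[v_T^{1-2n}]$ with $\beta=0$ and noting that $v^{1-2n}$ alone gives a geometric series with ratio $(X_0-k)^2/(2T(1-\rho^2)v^2)$. Its radius then has to be compared with the claimed one, or absorbed into the same argument.

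Next I would insert the bound. For the $a_{2p+2}$ series, the $n$-th term is bounded in absolute value by a constant depending only on $p$ and the model, times
\[
\frac{(2n+2p)!}{(2n)!\,(n+p)!\,2^{n+p}}\cdot\frac{|X_0-k|^{2n}}{((1-\rho^2)T)^{n}}\left(\frac{\nu^2T}{\sigma_0^2}\right)^{n}2^{n}\,\Gamma\!\left(n+p+\tfrac32\right).
\]
The powers of $T$ cancel and the factors $2^{n}$ cancel. Stirling's formula, or simply the ratio of consecutive terms, shows that $\frac{(2n+2p)!\,\Gamma(n+p+3/2)}{(2n)!\,(n+p)!}$ grows only polynomially in $n$ for fixed $p$. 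The ratio of consecutive terms therefore tends to $\frac{\nu^2(X_0-k)^2}{\sigma_0^2(1-\rho^2)}$, and the series converges when this is less than $1$, which is the stated condition. The same bookkeeping, with odd powers, handles $c(p,n)$, $c_n$ and $b_n$; in each case only the $n$-dependent factors matter for the limit of the ratio.

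The main obstacle is the factorial bookkeeping. One has to confirm that the combinatorial prefactor coming from the derivatives of $N'$ (the $\frac{(2n+2p)!}{(2n)!(n+p)!2^{n+p}}$ type factors) combines with $\Gamma(\alpha+1)2^{\alpha}$ from Proposition \ref{prop: new bound} into something subexponential, rather than leaving a stray factor of $2^n$ or $4^n$ that would change the radius. I would check this first via $\frac{(2m)!}{m!}\sim 4^m m!/\sqrt{\pi m}$, which gives $\frac{(2n+2p)!}{(n+p)!\,2^{n+p}}\sim 2^{n}(n+p)!$, up to polynomial factors. Dividing by $(2n)!\sim 4^n (n!)^2$ and multiplying by $\Gamma(n+p+3/2)\,2^{n}$ leaves $(n+p)!\,\Gamma(n+p+3/2)/(n!)^2$, which is polynomial in $n$. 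Finally, the $b_n$ term with $v^{1-2n}$ has to be reconciled: I would observe that $v\geq$ the relevant scale is not guaranteed, so convergence of that deterministic part would be stated under the same hypothesis only if $\frac{(X_0-k)^2}{2T(1-\rho^2)v^2}$ is also small. Failing that, I would recombine it with $E[v_T^{1-2n}]$ before bounding.
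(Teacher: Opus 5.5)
Your treatment of the series for $m\geq 1$, and of the $E[v_T^{1-2n}]$ part of $a_0$, is the paper's own argument. You apply Proposition \ref{prop: new bound} to all but finitely many $n$ (those with $\alpha\geq\beta$). You check that the combinatorial prefactor times $2^{\alpha}\Gamma(\alpha+1)$ is only polynomial in $n$. You then run the ratio test on the majorant in $u=\nu^2(X_0-k)^2/((1-\rho^2)\sigma_0^2)$. That part is fine.

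The step that goes wrong is the deterministic part of $a_0$. You claim the $v^{1-2n}$ terms form a geometric series with ratio $z=(X_0-k)^2/(2T(1-\rho^2)v^2)$. From this you conclude that the theorem might need the extra condition $z<1$, or some unspecified recombination with $E[v_T^{1-2n}]$. As written, this leaves the case $m=0$ unproved.

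The premise is false. The coefficient $b_n$ carries the factor $1/(n!\,(2n-1))$, so this part of the series is a constant multiple of
\[
\sum_{n\geq 1}\frac{(-1)^n}{n!\,(2n-1)}\,z^n ,
\]
which converges absolutely for every $z$, since it is dominated by $e^{|z|}$. So splitting $E[v_T^{1-2n}]-v^{1-2n}$ into two separate series is legitimate, and the deterministic piece converges for all strikes. The region for $a_0$ is governed only by the stochastic piece. There $\beta=0$ and $\alpha=n-\tfrac12$, so Proposition \ref{prop: new bound} applies for all $n\geq1$. Up to constants the majorant is $\Gamma(n+\tfrac12)\,|u|^n/(n!\,(2n-1))$, which again gives $|u|<1$.

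No extra hypothesis and no recombination are needed. The paper itself disposes of $m=0$ with ``the same argument'' and never mentions this term; the $1/n!$ is the reason it is harmless.
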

\begin{proof}
    We will show the convergence in the case $m \geq 2$ since the convergence for $m = 0$ and $m = 1$ follow the same argument. Moreover, we will only detail the computations for $m  = 2p+2$ since the $2p+3$ case is analogous. Recall that for $m = 2p+2$ we want to write the coefficient $a_{2p+2}(\rho)$ as
    \[
    \frac{1}{\sqrt{2\pi}} \sum_{n=0}^{\infty}\frac{(-1)^{n+p}(2n+2p)!(X_0-k)^{2n}}{(2n)! (n+p)!2^{n+p}(1-\rho^2)^{n+p+1/2}T^{n+p+1/2}}E\left[ \xi_T^{2p+2} v_T^{-2n-2p-1}\right].
    \]
    We use the bound given in Proposition \ref{prop: new bound}, which applies if $2p+2 \leq n+p+\frac{1}{2}$ (or equivalently, $n \geq p +\frac{3}{2}$) to say that the series defining $a_{2p+2}(\rho)$ converges if the following series converges:
    \[
\frac{1}{\sqrt{2\pi}} \sum_{n=p+2}^{\infty}\frac{(-1)^{n+p}(2n+2p)!(X_0-k)^{2n}}{(2n)! (n+p)!2^{n+p}(1-\rho^2)^{n+p+1/2}T^{n+p+1/2}} \left( \frac{\nu^2T}{\sigma_0^2}\right)^{n+p+\frac{1}{2}} 2^{n+p+\frac{1}{2}} \Gamma(n+p+\frac{3}{2}).
\]
We can make certain useful reductions. The maturities cancel, we can factor out a $(\frac{\nu^2}{\sigma_0^2})^{p+1/2}$, we can also factor out a $(1-\rho^2)^{p+1/2}$ and a $(-1)^{p}$ to reduce the convergence of the previous series to the one of
\[
\sum_{n=p+2}^{\infty} \frac{(2n+2p)!}{(2n)!(n+p)!} \Gamma(n+p+\frac{3}{2}) u^n
\]
where 
\[
u = \frac{-(X_0 - k)^2 \nu^2}{(1-\rho^2) \sigma_0^2}.
\]
To streamline the notation, we set 
\[
a_n = \frac{(2n+2p)!}{(2n)!(n+p)!} \Gamma(n+p+\frac{3}{2})
\]We use the quotient criterion. Let
\[
L = \lim_{n \to \infty} \left|\frac{a_{n+1}}{a_n} \right|.
\]
Then the radius of convergence is $R = 1/L$. Doing basic manipulations with the Gamma function and the factorials it can be easily shown that
\[
a_{n+1} = \frac{(2n+2p+2)(2n+2p+1)(n+p+\frac{3}{2})}{(2n+2)(2n+1)(n+p+1)}a_n.
\]
So $L = R = 1$ and the convergence holds if $|u| < 1$, that is,
\[
|X_0 - k| \leq \frac{\sigma_0\sqrt{1-\rho^2}}{\nu},
\]
as desired.
\end{proof}
\section{Numerical examples}
In this section we will show numerically the performance of our method. As a benchmark for option prices in all models, we will use a Monte Carlo method over the Hull-White formula. That is, our benchmark is
\[
V \approx \sum_{n = 1}^{N_{MC}} Bac(T,X_0 + \rho \xi_T(\omega_n), k, \sqrt{1-\rho^2}v_T(\omega_n))
\]
with $N_{MC} = 400.000$. In order to reduce the variance of the Monte Carlo method, we use antithetic variables on the Brownian motion $W$ that drives $\sigma_t$. In the application of our method, the moments $E[\xi_T^{\beta} v_T^{-2\alpha}]$ are computed via Monte Carlo with a control variate based on the variance swap. That is, we use as a control variate
\[
b\Phi = b\left(\frac{1}{T}\int_0^T \sigma_s^2 - E(\sigma_s^2) ds \right),
\]
where the constant $b$  is chosen so that the variance of
\[
\xi_T^{\beta} v_T^{-2\alpha}-b \Phi
\]
is minimized. Regarding the benchmark method for the computation of the Greeks, we will use finite differences with $N_{MC} = 400.000$ Monte Carlo simulations and difference step $h = 2^{-5}$. The choice of $h$ is for the sake of consistency with \cite{alos2021malliavin}. Since the Bachelier function satisfies
\[
Bac(T, \lambda X_0, k, \lambda \sigma) = \lambda Bac(T,X_0, k, \sigma)
\]
we will assume without loss of generality that $X_0 = 100$.
\begin{ex}
    We first consider the SABR model 
    \[
    \sigma_t = \sigma_0 \exp\left( \nu W_t - \frac{\nu^2 t}{2}\right),
    \]
    with $\sigma_0 = 35$, $\nu = 0.7$ and we choose the correlation parameter $\rho = -0.3$. According to Theorem \ref{convergence coefficients}, we can ensure convergence in the range
    \[
    |X_0- k| < \frac{35}{0.7}\sqrt{1-(-0.3)^2} \approx 47.696960.
    \]
    We choose as a maximum order for the $\rho$ scale expansion $M = 4$ and we choose $N = 50$ as for the maximum order of the expansion defining each coefficient. We apply the method for $400$ strikes equally spaced in the interval $k \in [45, 155]$ for the maturity time $T = 2$ years. In Figure \ref{fig:sabr03} we show the accuracy of the method by plotting the option prices as well as the Bachelier implied volatility smile.
    \begin{figure}[H]
        \centering
        \includegraphics[width=\textwidth]{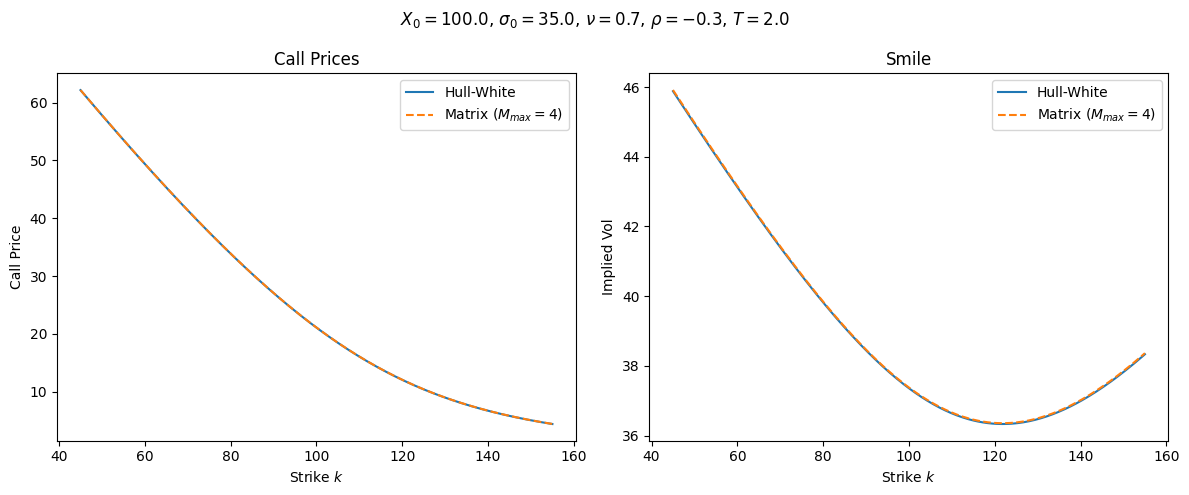}
        \caption{Call prices and implied volatility smile in the SABR model with $M_{max} = 4$ and $N_{max} = 50$.}
        \label{fig:sabr03}
    \end{figure}
    In table \ref{tab:iv_errors_sabr03} we show the relative error between implied volatilities.
\begin{table}[H]
\centering
\caption{Relative error of the implied volatilities in Figure \ref{fig:sabr03} for selected strikes.}
\label{tab:iv_errors_sabr03}
\begin{tabular}{c c}
\toprule
Strike $k$ & Relative error \\
\midrule
45.00  & $7.28\times 10^{-4}$ \\
58.51  & $6.61\times 10^{-4}$ \\
72.29  & $5.83\times 10^{-4}$ \\
86.08  & $5.05\times 10^{-4}$ \\
99.86  & $5.39\times 10^{-4}$ \\
113.65 & $6.44\times 10^{-4}$ \\
127.43 & $6.70\times 10^{-4}$ \\
141.22 & $7.14\times 10^{-4}$ \\
155.00 & $8.31\times 10^{-4}$ \\
\bottomrule
\end{tabular}
\end{table}

    Notice that our method is accurate even when we compute option prices slightly out of the radius of convergence of the coefficients. This does not necessarily confirm that the radius of convergence is greater than the one we stablish, but at least shows the validity of our method as an asymptotic approximation rather than a convergent sequence. The same method allows us to compute the Delta and the Gamma of the options. In Figure \ref{fig:greeks_sabr03} we show how the method is also accurate for the computations of the Greeks.
    \begin{figure}[H]
        \centering
        \includegraphics[width=\textwidth]{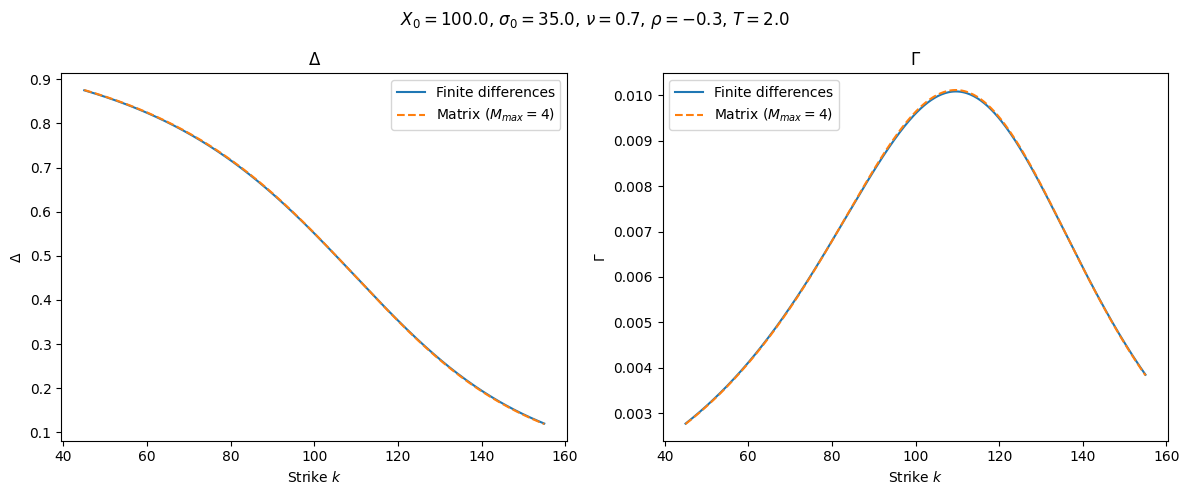}
        \caption{Delta and Gamma of the options with $M_{max} = 4$ and $N_{max} = 50$.}
        \label{fig:greeks_sabr03}
    \end{figure}
\end{ex}
\begin{ex}
    We choose again the SABR model, but this time we choose as parameters $\sigma_0 = 25$, $\nu = 0.4$, $\rho = -0.5$ and 500 options with strikes $k \in [40, 160]$ and time to maturity $T = 5$ years. We apply our method with $M_{max} = 10$, $N_{max} = 50$. In Figures \ref{fig:sabr05} and \ref{fig:greeks_sabr05} we plot the result of the computation of the option prices and the Greeks. 
    
    \begin{figure}[H]
        \centering
        \includegraphics[width=\textwidth]{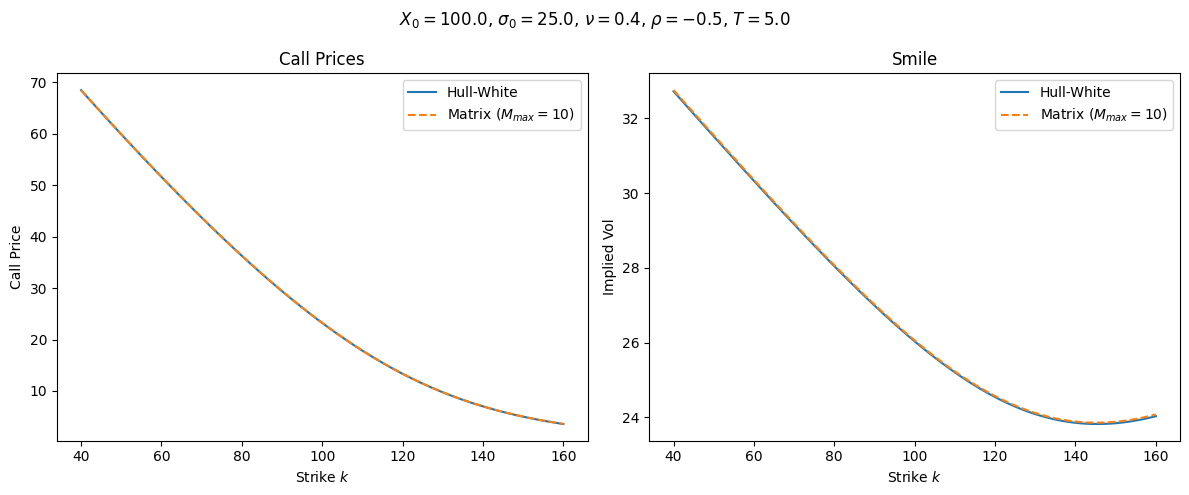}
        \caption{Call prices and implied volatility smile in the SABR model with $M_{max} = 10$ and $N_{max} = 50$.}
        \label{fig:sabr05}
    \end{figure}
    
    In Table \ref{tab:iv_errors_sabr_05} we show the relative errors between the implied volatilities to check that the method is again accurate.
    
\begin{table}[H]
\centering
\caption{Relative errors between implied volatilities of Figure \ref{fig:sabr05}.}
\label{tab:iv_errors_sabr_05}
\begin{tabular}{c c}
\toprule
Strike $k$ & Relative error \\
\midrule
40.00  & $1.02\times 10^{-3}$ \\
54.91  & $9.68\times 10^{-4}$ \\
69.82  & $9.30\times 10^{-4}$ \\
84.97  & $9.22\times 10^{-4}$ \\
99.88  & $9.60\times 10^{-4}$ \\
114.79 & $1.04\times 10^{-3}$ \\
129.94 & $1.22\times 10^{-3}$ \\
144.85 & $1.50\times 10^{-3}$ \\
160.00 & $1.95\times 10^{-3}$ \\
\bottomrule
\end{tabular}
\end{table}
    
    \begin{figure}[H]
        \centering
        \includegraphics[width=\textwidth]{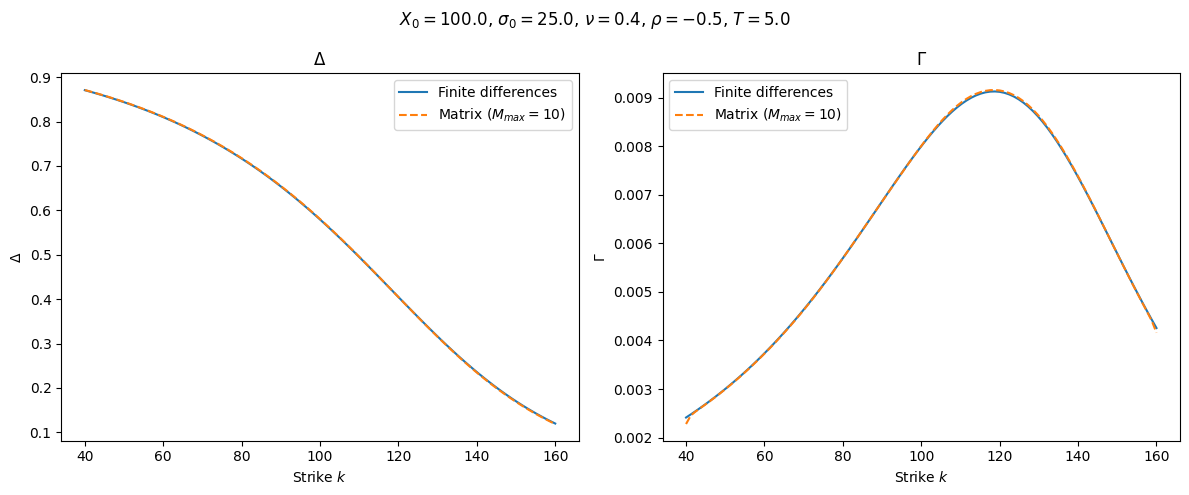}
        \caption{Delta and Gamma of the options with $M_{max} = 10$ and $N_{max} = 50$.}
        \label{fig:greeks_sabr05}
    \end{figure}
\end{ex}

\begin{ex}
    We now consider the rough Bergomi model, that is,
    \[
    \sigma_t^2 = \sigma_0^2 \exp\left( \eta W^H_t - \frac{\eta^2t^{2H}}{2}\right).
    \]
    In this case we haven't derived a result regarding the region of convergence of our method, meaning that for the rough Bergomi model we do not know if our method provides convergent or asymptotic approximation formulas. However, we can still show the validity of our method in this case as long as the correlation parameter $\rho$ stays moderate. We choose as parameters $\sigma_0 = 45$, $\eta = 1.5$, $H = 0.07$. We will test the method with option prices with strikes $k \in [60, 140]$ and time to maturity $T = 0.5$ years. For our method, we choose $M_{max} = 6$ and $N_{max} =50$. In Figures \ref{fig:price_rb04} and \ref{fig:greeks_rb04} we can see the performance of our method.
    \begin{figure}[H]
        \centering
        \includegraphics[width=\textwidth]{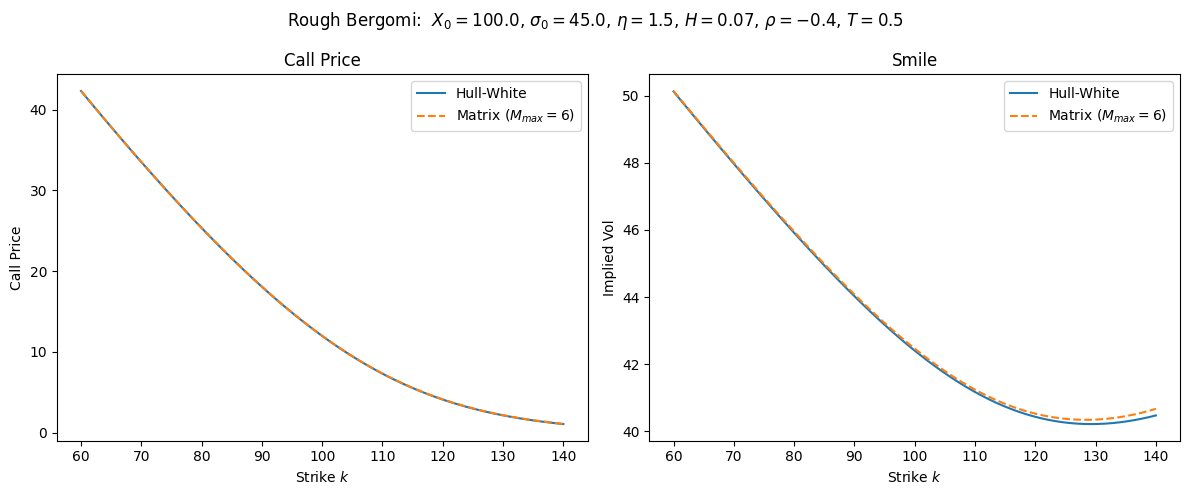}
        \caption{Call prices and implied volatility smile in the rough Bergomi model with $M_{max} = 6$ and $N_{max} = 50$.}
        \label{fig:price_rb04}
    \end{figure}

    In Table \ref{tab:iv_errors_rbergomi} we see the relative errors between the implied volatilities for this example.

    \begin{table}[H]
\centering
\caption{Relative error of the implied volatility in Figure \ref{fig:price_rb04}.}
\label{tab:iv_errors_rbergomi}
\begin{tabular}{c c}
\toprule
Strike $k$ & Relative error \\
\midrule
60.00  & $2.81\times 10^{-4}$ \\
69.82  & $5.71\times 10^{-4}$ \\
79.85  & $8.44\times 10^{-4}$ \\
89.87  & $1.14\times 10^{-3}$ \\
99.90  & $1.41\times 10^{-3}$ \\
109.92 & $1.72\times 10^{-3}$ \\
119.95 & $2.32\times 10^{-3}$ \\
129.97 & $3.28\times 10^{-3}$ \\
140.00 & $4.88\times 10^{-3}$ \\
\bottomrule
\end{tabular}
\end{table}

    \begin{figure}[H]
        \centering
        \includegraphics[width=\textwidth]{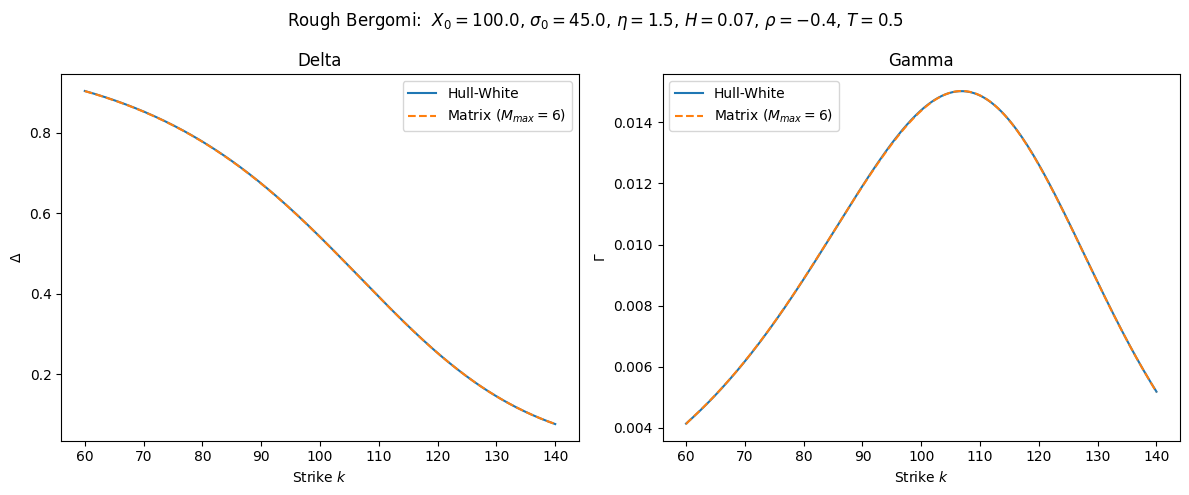}
        \caption{Delta and Gamma of the options with $M_{max} = 6$ and $N_{max} = 50$.}
        \label{fig:greeks_rb04}
    \end{figure}
\end{ex}
\newpage

\section{Conclusions}
In this paper we have applied Taylor expansions and the Itô formula to the Hull-White formula for option pricing in the Bachelier model to derive a numerical method in which, based on elementary linear algebra, we are able to simultaneously compute option prices together with their Delta and Gamma. The number of computations required for the application of the method does not depend on the number of options we want to compute. Therefore, a pre-computation of a finite number of expectations is enough to compute option prices and Greeks for an unlimited amount of strikes within a range of convergence.

\bibliographystyle{apalike}
\bibliography{references.bib}
\end{document}